\newcommand{\R}{\mathbb{R}}
\newcommand{\N}{\mathbb{N}}
\newcommand\cG{{\ensuremath{\mathcal{G}}}\xspace}
\newcommand\cH{{\ensuremath{\mathcal{H}}}\xspace}
\newcommand\cK{{\ensuremath{\mathcal{K}}}\xspace}
\newtheorem{theorem}{Theorem}
\newtheorem{lemma}[theorem]{Lemma}
\newtheorem{proposition}[theorem]{Proposition}
\newtheorem{corollary}[theorem]{Corollary}
\newtheorem{definition}{Definition}
\newtheorem{remark}[theorem]{Remark}
\title{SoS certification for symmetric quadratic functions and\\
its connection to constrained Boolean hypercube\\optimization \thanks{Adam Kurpisz was supported by SNSF project PZ00P2$\_$174117, Aaron Potechin was supported in part by NSF grant CCF$\-$2008920.}}
\author{
  Adam, Kurpisz\\
  \small{ETH Z\"urich, Department of Mathematics, R\"amistrasse 101,	8092 Z\"urich, Switzerland}\\
  {adam.kurpisz@ifor.math.ethz.ch}\\
  \and
  Aaron, Potechin\\
  \small{University of Chicago, Department of Computer Science,  5730 S Ellis Ave, Chicago, IL 60637, US}\\
  {potechin@uchicago.edu} \\
  \and
  Elias Samuel, Wirth \\
  \small{TU Berlin, Institute of Mathematics, Strasse des 17. Juni 136, 10623 Berlin, Germany}\\
  {wirth@zib.de}
}
\date{}
\begin{document}
\maketitle
\begin{abstract}
We study the rank of the Sum of Squares (SoS) hierarchy over the Boolean hypercube for Symmetric Quadratic Functions (SQFs) in $n$ variables with roots placed in points $k-1$ and $k$. Functions of this type have played a central role in deepening the understanding of the performance of the SoS method for various unconstrained Boolean hypercube optimization problems, including the Max Cut problem. 
Recently, Lee, Prakash, de Wolf, and Yuen proved a lower bound on the SoS rank for SQFs of $\Omega(\sqrt{k(n-k)})$ and conjectured the lower bound of $\Omega(n)$ by similarity to a polynomial representation of the $n$-bit OR function.

Using Chebyshev polynomials, we refute the Lee---Prakash---de~Wolf---Yuen conjecture and prove that the SoS rank for SQFs is at most $O(\sqrt{nk}\log(n))$.

We connect this result to two constrained Boolean hypercube optimization problems. First, we provide a degree $O( \sqrt{n})$ SoS certificate that matches the known SoS rank lower bound for an instance of Min Knapsack, a problem that was intensively studied in the literature. Second, we study an instance of the Set Cover problem for which Bienstock and Zuckerberg conjectured an SoS rank lower bound of $n/4$. We refute the Bienstock---Zuckerberg conjecture and provide a degree $O(\sqrt{n}\log(n))$ SoS certificate for this problem.
\end{abstract}
\section{Introduction}
Semialgebraic proof systems, also called certificates of nonnegativity, are systematic methods to prove nonnegativity of polynomials over semialgebraic sets. One of the most successful approaches for constructing theoretically efficient algorithms for polynomial optimization problems is the Sum of Squares (SoS) certificate~\cite{GrigorievV01,Nesterov00,parrilo00,schor87}, 
For a wide variety of combinatorial optimization problems, SoS provides the best available algorithms~\cite{AroraRV09,GoemansW95,BarakRS11,GuruswamiS11,Lovasz79}.
The strength of this method has also come to light for Max CSP~\cite{LeeRagSteu15} and problems in robust estimation~\cite{KothariSS18}, dictionary learning~\cite{BarakKS15,SchrammS17}, tensor completion and decomposition~\cite{BarakM16,HopkinsSSS16,PotechinS17}, and problems arising from statistical physics~\cite{GhoshJJPR20}. 

However, the SoS algorithm also admits certain weaknesses. It is known to struggle with solving certain combinatorial optimization problems, e.g.,~\cite{BhaskaraCVGZ12,Cheung07,Grigoriev01b,KurpiszLM17,Tulsiani09}.
In a seminal example, Grigoriev showed that a $\Omega(n)$ degree SoS certificate is needed to detect a simple integrality argument for the {Knapsack} problem \cite{Grigoriev01}, see also~\cite{GrigorievHP02,KurpiszLM16,Laurent03a}.
A degree $n^{\Omega(\varepsilon)}$ SoS algorithm was proved to be unable to asymptotically certify an upper bound smaller than 2 times the optimal value for Sherrington-Kirkpatric Hamiltonian~\cite{BandeiraK19,GhoshJJPR20}.
Moreover, the degree $\Omega(\sqrt{n})$ SoS hierarchy was proved to have problems scheduling unit size jobs on a single machine to minimize the number of late jobs, see~\cite{KurpiszLM17b}, even though the problem is known to be solvable in polynomial time using the Moore-Hodgson algorithm~\cite{Moore68}. Finally, various examples where the SoS hierarchy fares very badly have been shown for the planted clique~\cite{BarakHKKMP16,MekaPW15} and Max CSP problems~\cite{KothariMOW17,ThapperZ17}.

The discrepancy between the excellent performance of the SoS hierarchy and its limitations has been studied extensively throughout the last decade. Thus, a natural question arises: what factors determine the difficulty of solving a problem for the SoS method?

A prominent example that was studied through the lens of this question is the Max Cut problem, which not only lies at the center of SoS research but was also one of the first problems for which lower bounds of the SoS rank were studied. Grigoriev proved that SoS needs at least degree $\lfloor\frac{n}{2} \rfloor$ to certify the size of the maximum cut in an odd clique of $n$ vertices~\cite{Grigoriev01}, for alternative proofs see also~\cite{GrigorievHP02,KurpiszLM16,Laurent03a}. In a breakthrough paper nearly two decades later, Parrilo showed that the Grigoriev's lower bound is tight by proving that every $n$-variate polynomial of degree $2$, nonnegative over the Boolean hypercube has an SoS certificate of degree at most $\lceil\frac{n}{2}\rceil$, see~\cite{FawziSP15}.
Subsequently, the analog of the results by Grigoriev and Parrilo for higher degree symmetric functions recently appeared in~\cite{KurpiszLM16c, SakueTKI17}, respectively.
Many of the problem instances with large lower bounds of the SoS rank target known limitations of the SoS method such as an issue with dealing with integrality constraints. Indeed, certifying the size of the maximum cut in a clique can be transformed into the problem of proving nonnegativity of the \emph{Symmetric Quadratic Function} (SQF) of the form $q_{\left\lceil\frac{n}{2}\right\rceil}(\mathbf{x})$ over the Boolean hypercube, where, throughout this paper, $q_k:\{0,1\}^n\to\R$ is a multivariate polynomial of the form
\begin{equation}\label{eq:SQF}
	q_k(\mathbf{x}):=(|\mathbf{x}|-k)(|\mathbf{x}|-k+1).
\end{equation}
The optimization of degree $2$ polynomials over the Boolean hypercube plays a central role in Theoretical Computer Science.
This claim is supported by the fact that high degree optimization problems attracted limited attention, especially since solving an NP-complete problem can be reduced in polynomial time to proving nonnegativity of a degree-$4$ even form~\cite{MurtyK87}. Moreover, if an SQF has a complex root with a corresponding conjugate root, the polynomial is globally nonnegative and admits an SoS certificate of degree $2$. Similarly, there exists an SoS certificate of nonnegativity of degree $2$ for SQFs over the Boolean hypercube if the roots are real and placed outside the interval $[0,n]$. Hence, the only interesting case is when the roots are real and located within some interval $[k-1,k]$ for $k \in \{1,\ldots,n\}$. 
Finding an SoS representation of the symmetric function $q_k$ has gained significant attention in the SoS community. However, up to this day, the exact SoS rank for $q_k$ is not known. The most recent result towards a characterization of the SoS rank of $q_k$ provides a lower and upper bound of the SoS degree that approximates the function $q_k$ with SoS polynomials in $l_1$ and $l_\infty$ norm~\cite{LeePWY16}. 
However, since finding an exact SoS certificate is at least as difficult as providing an approximate SoS representation, the result implies that for $k\geq 2$, $q_k$ does not admit an SoS certificate of degree smaller than $\Omega\left( \sqrt{k(n-k)} \right)$. Moreover, in~\cite{LeePWY16}, Lee, Prakash, de Wolf, and Yuen conjectured that the lower bound of the SoS approximate representation with error at most $\varepsilon$ in the $l_\infty$ norm is expected to be $ \Omega\left( \sqrt{k(n-k)}+ \sqrt{n \log(1/\varepsilon)} \right)$. They support the conjecture by arguing about similarity with approximating $n$-bit OR functions~\cite{Paturi92,Wolf10}. This conjecture, if true, would imply a lower bound on the exact SoS certificate for SQFs of $\Omega(n)$, even for small, constant values of $k$. Proving this conjecture is left as an open question in~\cite{LeePWY16}.
In this paper, we refute the Lee---Prakash---de Wolf---Yuen (LPdWY) conjecture. We show that certifying SQFs is easier than representing $n$-bit OR functions. More specifically, we prove the following theorem.
\begin{theorem}
	\label{thm:SQF_SoS_degree}
	For any $k \in \{2,\ldots, \lceil \frac{n}{2} \rceil \} $, there exists a degree $O(\sqrt{nk}\log (n)  )$ SoS certificate of nonnegativity for the Boolean function $q_k$ as in~\eqref{eq:SQF}. 
\end{theorem}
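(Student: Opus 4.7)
The plan is to build an explicit symmetric SoS representation of $q_k$ modulo the hypercube ideal, with rescaled Chebyshev polynomials as the key low-degree univariate building blocks. Since $q_k$ is invariant under every permutation of the coordinates, an averaging argument lets me restrict attention to symmetric SoS certificates, and on the hypercube the sum $X = x_1 + \cdots + x_n$ takes only integer values in $\{0,1,\ldots,n\}$. On this grid, $q_k(X) = (X-k)(X-k+1)$ vanishes only at $X = k-1$ and $X = k$, and at every other integer $j$ it equals a strictly positive integer that grows like $(j - k + \tfrac12)^2$. The task thereby reduces to an interpolation-with-SoS problem on the univariate grid $\{0,1,\ldots,n\}$, augmented by the usual hypercube-ideal corrections and possibly by contributions from non-trivial $S_n$-isotypic components.

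The central ingredient is a Chebyshev polynomial $T_d$, rescaled to an interval of length roughly $\sqrt{nk}$ centered near $k-\tfrac12$. A polynomial of degree $d$ that is $O(1)$ on such an interval must grow outside it at the rate dictated by the Chebyshev extremal bound; once $d \gtrsim \sqrt{nk}\log n$, this growth dominates any fixed polynomial factor at every integer of $\{0,\ldots,n\}\setminus\{k-1,k\}$. This is precisely where the $\sqrt{nk}$ factor in the final bound comes from, and is the only place in the argument where $k$ enters essentially. The Chebyshev polynomial is therefore the natural primitive from which to manufacture low-degree univariate polynomials that are small near $\{k-1,k\}$ and large elsewhere on the grid.

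With this building block in hand, the plan is to produce the symmetric SoS decomposition in two stages. First, use Chebyshev-based polynomials to write an approximate identity $q_k(X) = \sum_j g_j(X)^2 + r(X)$ on the integer grid, where the $g_j$ have degree $O(\sqrt{nk})$ and the nonnegative residual $r(X)$ is smaller than any fixed inverse polynomial in $n$ on $\{0,\ldots,n\}$; here the Chebyshev growth outside the small interval around the roots lets each $g_j$ track the behavior of $q_k$ at the corresponding integer while remaining small at the two zeros. Second, amplify this approximate identity into an exact SoS representation modulo the hypercube ideal via an iteration that multiplies the residual by further Chebyshev-based factors, driving $r(X)$ geometrically to zero in $O(\log n)$ rounds; each round costs an extra $O(\sqrt{nk})$ in degree, giving the claimed total of $O(\sqrt{nk}\log n)$. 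The final, negligible residual after $O(\log n)$ rounds can then be absorbed by a trivial base-case SoS.

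The main obstacle is the second stage. Converting a pointwise approximation on the integer grid into an \emph{exact} SoS identity over the full hypercube, while tracking possible contributions from non-trivial $S_n$-isotypic components and keeping the multiplicative cost of each amplification step to only an additive $O(\sqrt{nk})$ in degree, is delicate: one must balance the magnitude of the Chebyshev polynomial at every integer in $\{0,\ldots,n\}$, the compounding of errors across iterations, and the cumulative degree growth, so that the amplification really does converge after $O(\log n)$ rounds without violating the degree budget. This quantitative amplification lemma, built on precise Chebyshev estimates, is where I expect the main technical work of the proof to lie.
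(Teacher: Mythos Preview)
Your plan has a structural gap. You propose to write $q_k(X)=\sum_j g_j(X)^2 + r(X)$ with univariate, Chebyshev-based $g_j$, and then iterate on $r$. But any sum of univariate squares is nonnegative on all of $\mathbb{R}$, whereas $q_k(x)<0$ for $x\in(k-1,k)$; hence at every stage the residual satisfies $r(x)\le q_k(x)<0$ on that open interval. Iteration does not escape this: after $O(\log n)$ rounds the final residual is still negative on $(k-1,k)$, so it is not nonnegative on $[0,n]$ and cannot be handled by any univariate-SoS theorem, and there is no ``trivial base case'' of degree $o(n)$ waiting for you (Grigoriev's lower bound applies just as well to such residuals). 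Saying the certificate will be ``augmented by the usual hypercube-ideal corrections and possibly by non-trivial $S_n$-isotypic components'' does not address this; you need a concrete SoS-mod-Boolean building block that is itself negative on $(k-1,k)$.

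The paper supplies exactly that missing block: it takes $s_1(X)$ proportional to $X^{16k}(X-2k+1)^{16k}\prod_{i=0}^{2k-1}(X-i)$, which is SoS \emph{modulo the Boolean axioms} (because the falling factorial $\prod_{i=0}^{m}(X-i)$ is, by a lemma of Lee--Prakash--de~Wolf--Yuen) even though it is negative on $(k-1,k)$. This $s_1$, of degree $O(k)$, matches the sign and approximate size of $q_k$ throughout $[0,2k-1]$. Only after this step does a Chebyshev construction enter, and its role is quite different from what you describe: it is a damping factor $s_2(X)$ built from $T_{\sqrt{n/k}}$ rescaled to the interval $[2k-1,n]$ (not an interval of length $\sqrt{nk}$ around $k-\tfrac12$) and then raised to a power $O(k\log n)$; this is where the degree $O(\sqrt{nk}\log n)$ actually comes from. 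The product $s(X)=s_1(X)s_2(X)$ is a global lower bound for $q_k$ on all of $[0,n]$, and the nonnegative remainder $q_k-s$ is certified in one shot by the univariate SoS theorem on $[0,n]$. There is no iterative amplification; the $\log n$ arises from the exponent needed to make $s_2$ kill the $n^{O(k)}$ blow-up of $s_1$ on $[2k-1,n]$, not from a number of rounds.
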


We motivate the research on the SoS degree of the SQFs $q_k$ by connecting it to two combinatorial optimization problems.
We first consider the instance of the \textsc{Min Knapsack} (MK) problem. For $P\geq 2$, the problem is defined as: 
\begin{eqnarray}
	\mbox{MK:}
	\qquad \min  \sum_{i \in [n]} x_i   & \quad \mbox{s.t.} &  \sum_{i \in [n] } x_i \geq \frac{1}{P},   \qquad \mathbf{x}\in\{0,1\}^n .
	\label{eq:MK_def}
\end{eqnarray}
For $P=2$, the problem was previously considered by Cook and Dash \cite{cook2001matrix}. They proved that the Lovasz-Schrijver hierarchy rank is $n$. For the Sherali-Adams hierarchy, Laurent proved that the rank is also equal to $n$ and raised the open question to find the rank for the SoS hierarchy~\cite{Laurent03}.
For $n = 2$, they also proved that the SoS rank is $2$, but the discussion for general $n$ was left as an open question.
Currently, it is known that the SoS rank of the MK problem for $P=2$ falls within $\Omega(\sqrt{n})$ and $\lceil \frac{ n+ 4\lceil \sqrt{n} \rceil   }{2} \rceil$, see~\cite{Kurpisz19}. In this paper, we prove an upper bound on the SoS rank for the MK problem.
\begin{theorem}\label{thm:MK_SoS_rank}
	The SoS rank for the MK problem is $\Omega(\sqrt{n}\log(P))$.
\end{theorem}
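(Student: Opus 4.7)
The plan is to prove the $\Omega(\sqrt{n}\log P)$ lower bound by exhibiting, for every $d$ with $d\le c\sqrt{n}\log P$ and $c>0$ a sufficiently small absolute constant, an $S_n$-symmetric pseudo-expectation $\tilde{\E}$ of degree $2d$ on $\{0,1\}^n$ satisfying the Min Knapsack axioms---i.e., $\tilde{\E}[(x_i^2-x_i)g]=0$ for every $g$ of bounded degree, together with the localizing inequality $\tilde{\E}[\sigma(\sum_i x_i - 1/P)]\ge 0$ for every SoS polynomial $\sigma$ of degree at most $2d-2$---while attaining the infeasible objective value $\tilde{\E}[\sum_i x_i]<1$. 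By the standard SoS--pseudo-expectation duality, the existence of such a $\tilde{\E}$ certifies that no degree-$2d$ SoS proof can derive $\sum_i x_i\ge 1$ from the MK axioms, establishing the rank lower bound.

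After $S_n$-symmetrization, $\tilde{\E}$ is encoded by a single signed weight function $w:\{0,1,\ldots,n\}\to\mathbb{R}$ with $\sum_k w(k)=1$ and $\tilde{\E}[\sum_i x_i]=\sum_k k\,w(k)$. The two positivity requirements translate into PSD-ness of two matrices of size at most $(d+1)\times(d+1)$, with entries expressible as bilinear forms in $w$ against the Krawtchouk polynomial basis orthogonal to the weight $\binom{n}{k}$ on $\{0,\ldots,n\}$ (the localizing matrix being additionally weighted by the slack $(k-1/P)$). Thus the problem reduces to a univariate extremal moment question on $\{0,1,\ldots,n\}$.

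The construction extends the baseline $\Omega(\sqrt{n})$ pseudo-expectation of Kurpisz for $P=2$ by inserting a Chebyshev-type perturbation that scales with $P$. Concretely, let $\phi$ be the affine map sending $[1,n]$ onto $[-1,1]$ and let $T_d$ denote the Chebyshev polynomial of the first kind. I would take $w$ to be a carefully balanced linear combination of the baseline measure and $\pi(k)\,T_d(\phi(k))^2$, where $\pi$ is the reference weight used in the $P=2$ construction. By Krawtchouk orthogonality the SoS moment matrix remains PSD for any positive scalar multiple of this squared perturbation, and the Chebyshev perturbation allows the first moment $\tilde{\E}[\sum_i x_i]$ to be pushed closer to the LP value $1/P$ than to the integer value $1$, provided the Chebyshev amplification $T_d(\phi(0))\sim e^{2d/\sqrt{n}}$ delivers a factor of $P$, which occurs precisely at $d=\Theta(\sqrt{n}\log P)$.

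The main obstacle is verifying that the localizing PSD condition is preserved under the Chebyshev perturbation. The perturbation contributes a rank-one correction in the Krawtchouk basis whose sign depends on the amplification ratio $T_d(\phi(0))^2/T_d(\phi(1))^2$, and the negative slack $-1/P$ at $k=0$ creates the sole obstruction to positivity. Requiring this correction to be dominated by the positive contribution from $k\ge 1$ translates, via the Chebyshev growth rate $T_d(1+t)\sim e^{d\sqrt{2t}}$ with $t=2/(n-1)$, into exactly the threshold $d=\Omega(\sqrt{n}\log P)$. The technical core---a careful spectral analysis of this rank-one correction using the three-term recurrences of the Krawtchouk and Chebyshev families---should be tractable by classical orthogonal-polynomial techniques, but establishing the sharp constant in front of $\sqrt{n}\log P$ and controlling the lower-order correction terms in the asymptotic expansion of $T_d$ near the endpoint $1$ will be the principal quantitative burden.
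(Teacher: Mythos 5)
You have proved the wrong direction. Despite the ``$\Omega$'' in the displayed theorem, the surrounding text makes it unambiguous that this is a typo for ``$O$'' and that the result is an \emph{upper} bound. The paragraph introducing the theorem says ``we prove an upper bound on the SoS rank for the MK problem''; the sentence immediately after the theorem cites the best known lower bound as only $\Omega(\sqrt{n\log P})$ and notes that a gap of $O(\sqrt{\log P})$ remains; and Section~\ref{section:MK_problem} opens by announcing a degree-$O(\sqrt{n}\log P)$ SoS \emph{certificate}, together with the lower bound from the literature, as the content of the theorem. The paper's proof is constructive: it takes the symmetric multiplier $\tilde{s}_1(x) = \alpha\,T_d\!\left(\tfrac{x-1+r_0}{n}-1\right)^m$ with $d=\Theta(\sqrt{n})$, $\alpha=\Theta(1/n)$, and $m=O(\log P)$, arranged so that $\tilde{s}_1(0)>P$ while $\tilde{s}_1$ stays at most $\tfrac{1}{2}\min\{x-1,1\}$ on $[1,n]$, and then shows the residual $\tilde{s}_0(x)=(x-1)-\tilde{s}_1(x)(x-1/P)$ is positive on $\{0\}\cup(1,n]$ and, after factoring out a quadratic with roots in $(0,1]$ and invoking Corollary~\ref{cor:SQF_SoS_degree_closer_roots_crude_bound}, is a sum of squares modulo the Boolean axioms.

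Your proposal instead sketches a degree-$2d$ pseudo-expectation feasible for the MK axioms with objective value below $1$ for every $d\le c\sqrt{n}\log P$. If it could be completed, that would prove a \emph{lower} bound of $\Omega(\sqrt{n}\log P)$, which is not what the theorem asserts and is strictly stronger than anything the paper claims: the authors explicitly cite $\Omega(\sqrt{n\log P})$ as the best known lower bound and leave the $O(\sqrt{\log P})$ gap open. Independently of the direction mismatch, your sketch is far from a proof. The feasibility of the perturbed pseudo-expectation --- in particular, PSD-ness of the localizing matrix weighted by the slack $k-1/P$, which is negative at $k=0$ --- is precisely where such lower-bound constructions break down, and you defer it entirely, calling it ``the principal quantitative burden'' to be handled by unspecified ``classical orthogonal-polynomial techniques.'' There is no evidence that the rank-one-perturbation analysis goes through at the threshold $d=\Theta(\sqrt{n}\log P)$; if it did, you would have a new result closing a gap the paper leaves open.
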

The existing lower bound for general $P$ (see Lemma 14 of \cite{Kurpisz19}) is $\Omega(\sqrt{n\log(P)})$, so this is tight when $P$ is constant, though for larger $P$ there is a gap of $O(\sqrt{\log(P)})$.

We also consider the following instance of the \textsc{Set Cover} (SC) problem:
\begin{eqnarray}
	\mbox{SC:}
	\qquad \min  \sum_{i \in [n]} x_i   & \quad \mbox{s.t.} &  \sum_{i \in [n] \setminus\{j\}} x_i \geq 1 \qquad \forall j \in [n], \qquad \mathbf{x}\in\{0,1\}^n.
\end{eqnarray}
This instance was considered in~\cite{BienstockZ04} and it is known that the SoS hierarchy cannot solve this problem with a degree smaller than $\Omega(\sqrt{n})$~\cite{Kurpisz19}.
In~\cite{BienstockZ04}, Bienstock and Zuckerberg raised the question of what the actual SoS rank of this polytope is, conjecturing that, based on numerical experiments, the SoS rank is at least $\frac{n}{4}$. 
In this paper, using the SoS certificate for SQFs in Theorem~\ref{thm:SQF_SoS_degree}, we refute the Bienstock---Zuckerberg conjecture and provide a nearly tight SoS rank for the SC problem:%
\begin{theorem}\label{thm:SC_SoS_rank}
	The SoS rank for the SC problem is at most $O(\sqrt{n} \log (n))$.
\end{theorem}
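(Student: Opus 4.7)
The strategy is to derive an SoS certificate for $\sum_i x_i - 2\geq 0$ on the SC-feasible region by combining the SoS certificate for $q_2$ furnished by Theorem~\ref{thm:SQF_SoS_degree} with identities arising from the Set Cover constraints $g_j(\mathbf{x}):=\sum_{i\neq j}x_i-1$. A direct computation, using the Boolean identity $x_j^2\equiv x_j$, yields the two relations
\[\sum_{j=1}^n x_j^2\, g_j(\mathbf{x})\;\equiv\; |\mathbf{x}|(|\mathbf{x}|-2)\qquad\text{and}\qquad\sum_{j=1}^n g_j(\mathbf{x})\;=\;(n-1)|\mathbf{x}|-n\]
modulo the Boolean ideal $\langle x_i^2-x_i:i\in[n]\rangle$. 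The first provides a degree-$2$ SoS-cone combination certifying $|\mathbf{x}|(|\mathbf{x}|-2)\geq 0$ on the feasible region; the second rearranges to $2|\mathbf{x}|-1=\tfrac{n+1}{n-1}+\tfrac{2}{n-1}\sum_j g_j$, a degree-$1$ certificate that $2|\mathbf{x}|-1$ is bounded below by the positive constant $\tfrac{n+1}{n-1}$ on the feasible region.

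Applying Theorem~\ref{thm:SQF_SoS_degree} with $k=2$ produces an SoS polynomial $\sigma(\mathbf{x})$ of degree $d:=O(\sqrt{n}\log n)$ with $q_2(\mathbf{x})=(|\mathbf{x}|-1)(|\mathbf{x}|-2)\equiv\sigma(\mathbf{x})$ modulo the Boolean ideal. Summing with the first identity above yields
\[\sigma(\mathbf{x})+\sum_{j=1}^n x_j^2\, g_j(\mathbf{x})\;\equiv\;(|\mathbf{x}|-1)(|\mathbf{x}|-2)+|\mathbf{x}|(|\mathbf{x}|-2)\;=\;(|\mathbf{x}|-2)(2|\mathbf{x}|-1),\]
so $(|\mathbf{x}|-2)(2|\mathbf{x}|-1)\geq 0$ admits an SoS certificate of degree $d$ on the feasible region.

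It remains to remove the multiplier $2|\mathbf{x}|-1$. The plan is to construct a symmetric SoS polynomial $P(|\mathbf{x}|)$ of degree $O(\sqrt{n}\log n)$, vanishing at $|\mathbf{x}|\in\{0,1\}$ so that no sign conflict arises with the negative values of $|\mathbf{x}|-2$ there, such that $P(|\mathbf{x}|)(2|\mathbf{x}|-1)$ equals $1$ plus a small SoS-controlled error on the feasible range $\{2,\ldots,n\}$. Multiplying the degree-$d$ certificate for $(|\mathbf{x}|-2)(2|\mathbf{x}|-1)$ by $P(|\mathbf{x}|)$ then yields an SoS certificate for $|\mathbf{x}|-2$ of degree $O(\sqrt{n}\log n)$, once the error term is absorbed by a further multiple of the $q_2$ certificate. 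Constructing $P$ is the principal technical challenge: since the SoS cone is not closed under division, this step requires a Chebyshev-polynomial approximation of $1/(2t-1)$ on an interval well-separated from the pole at $t=1/2$, which is precisely the regime in which the tools powering Theorem~\ref{thm:SQF_SoS_degree} produce the required degree bound of $O(\sqrt{n}\log n)$.
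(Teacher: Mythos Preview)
Your first two moves are sound and match what the paper establishes in Lemma~28: the identities
\[
\sum_{j}x_j^2\,g_j(\mathbf{x})\equiv |\mathbf{x}|(|\mathbf{x}|-2)
\quad\text{and}\quad
\sum_j g_j(\mathbf{x})=(n-1)|\mathbf{x}|-n
\]
(modulo the Boolean ideal) are correct, and combining the first with the degree-$O(\sqrt{n}\log n)$ certificate for $q_2$ does give a degree-$O(\sqrt{n}\log n)$ certificate for $(|\mathbf{x}|-2)(2|\mathbf{x}|-1)$.

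The gap is the ``division'' step. Write $t=|\mathbf{x}|$ and set $E(t)=1-P(t)(2t-1)$ for your SoS polynomial $P$. Your claim is that the remainder $(t-2)E(t)$ can be ``absorbed by a further multiple of the $q_2$ certificate''. This fails at $t=0$: since $P\geq 0$ and $2\cdot 0-1=-1$, we have $E(0)=1+P(0)\geq 1$, so $(t-2)E(t)\big|_{t=0}=-2(1+P(0))\leq -2$. Meanwhile $q_2(0)=2>0$, so subtracting any SoS multiple $\gamma(t)q_2(t)$ only pushes the value at $t=0$ further negative. Hence the remainder minus an SoS multiple of $q_2$ can never be SoS modulo the Boolean axioms, regardless of how $P$ is chosen (in particular, forcing $P(0)=P(1)=0$ does not help). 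The underlying reason is that $2t-1$ changes sign at $t=1/2$, so no SoS $P$ can make $P(t)(2t-1)$ close to $1$ near $t=0$; a genuine ``inverse'' of $2t-1$ in the SoS cone does not exist.

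To repair the argument you would have to re-invoke the constraint certificates $h_1=t-1$ and $h_2=t(t-2)$ with carefully tuned Chebyshev-type multipliers to kill the negative values of the remainder at $t\in\{0,1\}$ while keeping the rest nonnegative on $[2,n]$. But once you do that, the detour through $(t-2)(2t-1)$ buys nothing: this is exactly the direct construction the paper carries out. In the paper's main proof (Section~5) one chooses a single Chebyshev-based SoS multiplier $\tilde{s}$ for the aggregate constraint $(n-1)t-n=\sum_j g_j$, arranged so that the remainder $\tilde{s}_0(t)=t-2-\tilde{s}(t)((n-1)t-n)$ is positive on $[0,1)\cup(2,n]$ and has exactly two roots in $[1,2]$, one at $t=2$. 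Then $\tilde{s}_0$ factors as $\tilde{p}(t)(t-a)(t-2)$ with $a\in[1,2)$ and $\tilde{p}>0$ on $[0,n]$, and Theorem~\ref{thm:SQF_SoS_degree} (via Corollary~\ref{cor:SQF_SoS_degree_closer_roots_crude_bound}) certifies $(t-a)(t-2)$ in degree $O(\sqrt{n}\log n)$. Thus the paper uses the SQF certificate \emph{on the remainder at the end}, not as a starting ingredient to be divided out.
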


\section{Preliminaries}\label{sec:Preliminaries}

For $n\in \mathbb{N}$, let $[n]=\{1,\ldots,n\}$. For $\mathbf{x}\in \mathbb{R}^n$, let $\mathbb{R}[\mathbf{x}]=\mathbb{R}[x_1,\ldots,x_n]$ be the ring of \emph{$n$-variate real polynomials}.
For a set of polynomials $\cG \subseteq \R[x]$, the corresponding \emph{semialgebraic set} is
\begin{align*}
	\cG_+ \ := \ \{ \mathbf{x} \in \mathbb{R}^n~ |~ g(\mathbf{x})\geq 0 \text{ for all } {g \in \cG}  \} \subseteq \R^n.
\end{align*}
Throughout this paper, we consider optimization problems on the Boolean hypercube $\{0,1\}^n$ and therefore, for $\cH:=\{ \pm (x_1^2-x_1),\ldots,\pm(x_n^2-x_n) \}$, we assume that $\cG$ is of the form
\begin{align*}
	\cG \ := \ \cH \cup \{g_1,\ldots,g_m:~g_i \in \mathbb{R}[\mathbf{x}] \text{ for all } ~ i \in [m] \} ,
\end{align*}
where $m\in \N_{> 0}$. This implies that $\mathcal{G}_+\subseteq \{0,1\}^n$.
Moreover, define the \emph{cone of nonnegative polynomials with respect to a given semialgebraic set, $\mathcal{G}_+$,} as
\begin{align*}
	\cK(\cG_+) \ := \ \{f \in \R[\mathbf{x}] \ | \ f(\mathbf{x}) \geq 0 \text{ for all } \mathbf{x} \in \cG_+\}.
\end{align*}
For given $f \in \R[\mathbf{x}]$ and $\cG\subseteq \R[\mathbf{x}]$, define the corresponding \emph{Constrained Polynomial Optimization Problem} (CPOP) as
\begin{align*}
	\label{eq:intro_POP}
	\begin{aligned}
		f^* \ := \ \min\{f(\mathbf{x})~ |~ \mathbf{x} \in \cG_+\} \ = \ \max\{\lambda \in \mathbb{R}~ |~ f-\lambda \in \cK(\cG_+)\}.
	\end{aligned}
\end{align*}
Generally, since CPOP is NP-hard, it is desirable to find a proper subset that is a good inner approximation of $\cK(\cG_+) $ such that the corresponding program is computationally \emph{tractable}.
The \emph{SoS method} approximates the cone $\cK(\cG_+)$ by using the set of \emph{sum of square polynomials}. We define the set of finite sum of squares polynomials as
$\Sigma := \{s\mid s = \sum_{i=1}^{k}s_i^2, s_i\in \mathbb{R}[\mathbf{x}] \  \forall i\in [k], k \in \mathbb{N}_{>0}\}$
and let 
$\Sigma_{n,d}:= \{s\mid s = \sum_{i=1}^{k}s_i^2, s_i\in \mathbb{R}[\mathbf{x}] \text{ and } \deg(s_i) \leq d \ \forall i\in [k], k \in \mathbb{N}_{>0}\}$
denote the
polynomials which are sums of squares of polynomials of degree at most $d$.
We define the \emph{hierarchy of certificates of nonnegativity depending on $d,n \in\mathbb{N}$} as
\begin{equation*}
	\Sigma_{n,d}^{\cG}:= \left\{ s_0+\sum_{i=1}^m s_i g_i~|~s_i \in\Sigma_{n,d},~g_i \in \cG \ \forall i \in [m] \ \text{and} \ s_0\in \Sigma_{n, 2 \left\lceil \frac{2d +\deg(\cG)}{2} \right\rceil} \right\},
\end{equation*}
where $\deg(\cG)=\max\{\deg(g)~|~g \in \cG\}$.
The \emph{degree $d$ SoS certificate} for $f$ being nonnegative over $\cG_+$ is $f \in \Sigma_{n,d}^{\cG}$.  Moreover, throughout the paper we say that a multivariate polynomial $f$ is \emph{a degree $d$ SoS modulo Boolean axioms} if $f \in \Sigma_{n,d}^\cH$.
The \emph{degree $d$ SoS program} for CPOP~is
\begin{equation}
	\label{eq:intro_SoS_d_CPOP}
	f^{d}_\Sigma \ := \ \quad \max\{\lambda \in \mathbb{R}~ |~ f-\lambda \in \Sigma_{n,d}^{\cG}\}
\end{equation}
and is called \emph{exact} if $f^d_\Sigma=f^*$. The smallest degree $d$ such that the degree $d$ SoS program is exact is called the \emph{SoS rank}. 
Over the Boolean hypercube, the degree $d$ SoS program can be solved via a \emph{semidefinite program} (SDP) 
of size $O(m \sum_{k=0}^d \binom{n}{k})$. Moreover, the degree $n$ SoS program is exact, see, e.g.,~\cite{BarakS16,Lasserre01z,Laurent03}.

Throughout this paper, we often encounter the following type of multivariate polynomials.
\begin{definition}\label{def: symmetric polynomial}
	A polynomial $f:~\{0,1\}^n\to\R$ is \emph{symmetric} if there exists a univariate polynomial $\tilde{f}~:\R\to\R$ such that
	$$
	f(\mathbf{x})=\tilde{f}\left(\sum_{i=1}^n x_i\right)
	$$
	for all $\mathbf{x}\in \{0,1\}^n$.
\end{definition}
With this in mind, let $|\mathbf{x}|:= \sum_{i=1}^nx_i$ for any $\mathbf{x}\in \{0,1\}^n$.
To prove SoS rank upper bounds, we consider symmetric multivariate polynomials over $\{0,1\}^n$ as univariate polynomials over $[0,n]$ and apply one of the many results on SoS certificates for univariate polynomials. 
\begin{remark}
	\label{rem:univariate_to_multivaraite_SoS}
	Throughout this paper, we make frequent use of the fact that SoS certificates for polynomials over $[0,n]$ translate to SoS certificates for symmetric polynomials over $\{0,1\}^n$. More formally, if a univariate polynomial $\tilde{f}:\R\to \R$ has an univariate SoS certificate of degree $d$ on $[0,n]$, then the multivariate polynomial $f:\{0,1\}^n\to\R$ such that $f(\mathbf{x}):=\tilde{f}(|\mathbf{x}|)$ has a degree $d$ SoS certificate of nonnegativity over the Boolean hypercube.
\end{remark}
In this paper, we use the following theorem to prove the SoS rank for univariate polynomials.

\begin{theorem}[{\cite[Theorem 3.72]{blekherman2012semidefinite}}]\label{thm:Blekherman_TheSecond}
	Let $a<b$. Then the univariate polynomial $p(x)$ is nonnegative on $[a,b]$ if and only if it can be written as
	$$
	\begin{cases}
		p(x)=s(x)+(x-a)(b-x)\cdot t(x) & \text{if} \ \deg(p) \ \text{is even,}\\
		p(x)=(x-a) \cdot s(x)+(b-x)\cdot t(x) & \text{if} \ \deg(p) \ \text{is odd,}\\
	\end{cases}
	$$
	where $s,t$ are sum of squares. In the first case, we have $\deg(p)=2d$, $\deg(s)\leq 2d$, and $\deg(t)\leq 2d-2$. In the second, $\deg(p)=2d+1$, $\deg(s)\leq 2d$, and $\deg(t)\leq 2d-2$.
\end{theorem}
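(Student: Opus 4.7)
The proof naturally splits into the straightforward sufficiency direction and the more substantial necessity direction, which I would handle by combining Fej\'er--Riesz with an affine change of variables.

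Sufficiency is immediate: on $[a,b]$ both $x-a$ and $b-x$ are nonnegative, as is their product, and every SoS polynomial is globally nonnegative. Hence in either of the two proposed decompositions each summand is a product of nonnegative quantities, and so $p\geq 0$ on $[a,b]$.

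For the even case ($\deg p=2d$) of necessity, I would first reduce to $[a,b]=[-1,1]$ via an affine substitution, then set $x=\cos\theta$. This turns $p(x)$ into a nonnegative cosine polynomial of degree $2d$ in $\theta$; by Fej\'er--Riesz it equals $|q(e^{i\theta})|^2$ for some complex $q$ of degree $2d$. Splitting $q(e^{i\theta})$ into parts even and odd in $\theta$ gives $q(e^{i\theta})=E(\cos\theta)+i\sin\theta\cdot F(\cos\theta)$ for real polynomials $E,F$ of degrees at most $d$ and $d-1$ respectively, so that
\[
p(\cos\theta)=|q(e^{i\theta})|^2=E(\cos\theta)^2+\sin^2\theta\cdot F(\cos\theta)^2=E(x)^2+(1-x^2)F(x)^2.
\]
Pulling back through the affine map to $[a,b]$ yields $p(x)=s(x)+(x-a)(b-x)t(x)$ with $s$ a single square of degree $\leq 2d$ and $t$ a single square (rescaled by a positive constant) of degree $\leq 2d-2$, both of which are in particular SoS.

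For the odd case ($\deg p=2d+1$), consider $(x-a)p(x)$, a nonnegative polynomial on $[a,b]$ of even degree $2d+2$. Applying the even case gives $(x-a)p(x)=u(x)^2+(x-a)(b-x)v(x)^2$ with $\deg u\leq d+1$ and $\deg v\leq d$. Since $(x-a)$ divides the right-hand side and is irreducible in $\R[x]$, it divides $u$; writing $u=(x-a)\tilde u$ and dividing through by $(x-a)$ produces $p(x)=(x-a)\tilde u(x)^2+(b-x)v(x)^2$, which is the claimed odd form with $s=\tilde u^2$ and $t=v^2$ both SoS and with degrees tracked from the even-case bounds. The main technical obstacles are establishing Fej\'er--Riesz itself (equivalent to the fact that every univariate polynomial nonnegative on $\R$ is a sum of two squares, proved by pairing complex conjugate roots in the factorization) and a careful bookkeeping of degrees through the change of variables and through the exact divisibility argument for $(x-a)\mid u$ in the odd case; once these are in hand, the decomposition and its degree bounds fall out mechanically.
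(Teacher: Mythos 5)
The paper does not prove this statement; it is quoted verbatim from Blekherman--Parrilo--Thomas as Theorem~3.72 (the classical Markov--Luk\'acs theorem), so there is no in-paper proof to compare against. Your Fej\'er--Riesz strategy is a legitimate route to it, and the sufficiency direction and the odd-case reduction are fine. However, the even case as written has a genuine degree-bookkeeping gap, and without fixing it the theorem is not obtained.

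The problem is the step ``$q(e^{i\theta})=E(\cos\theta)+i\sin\theta\,F(\cos\theta)$ with $\deg E\le d$, $\deg F\le d-1$.'' Since $p(\cos\theta)$ is a trigonometric polynomial of degree $2d$, Fej\'er--Riesz produces $q(z)=\sum_{j=0}^{2d}b_jz^j$ of degree $2d$, and $q(e^{i\theta})$ has frequencies $0,1,\dots,2d$. Splitting \emph{that} into even and odd parts in $\theta$ yields $E$ of degree up to $2d$ and $F$ of degree up to $2d-1$ (e.g., for $p(x)=x^2$ on $[-1,1]$, $q(z)=\tfrac12(z^2+1)$ forces $E(x)=x^2$, $F(x)=x$), and hence $s=E^2$, $t=F^2$ of degrees up to $4d$ and $4d-2$ --- roughly twice the claimed bounds. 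The missing idea is to first recenter: set $L(\theta):=e^{-id\theta}q(e^{i\theta})$, which has the same modulus as $q(e^{i\theta})$ but is a Laurent trigonometric polynomial with frequencies in $\{-d,\dots,d\}$. For real-coefficient $q$ its even/odd decomposition in $\theta$ gives $L=E(\cos\theta)+i\sin\theta\,F(\cos\theta)$ with \emph{real} $E,F$ of degrees at most $d$ and $d-1$, and then $p(\cos\theta)=|L|^2=E(\cos\theta)^2+\sin^2\theta\,F(\cos\theta)^2$ yields the desired bounds. This recentering step is essential, not cosmetic; without it the argument proves only a weaker degree bound.

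One further remark on the odd case: your reduction gives $s=\tilde u^2$ of degree $\le 2d$ and $t=v^2$ of degree $\le 2d$, whereas the statement above asserts $\deg t\le 2d-2$. The bound $\deg t\le 2d-2$ is in fact incorrect as stated (take $p(x)=b-x$ on $[a,b]$ with $d=0$: it forces $t$ to be a nonzero constant, yet $\deg t\le -2$ would force $t=0$); the correct Markov--Luk\'acs bound for the odd case is $\deg s,\deg t\le 2d$, which is exactly what your argument delivers. So that part of your proof is right and the displayed statement appears to contain a transcription error.
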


Finally, throughout the paper we use degree-$d$ Chebyshev polynomials of the first type, which were used in several applications for bounds of sum of squares ranks, i.e., \cite{Kurpisz19, SlotL19, potechin2020sum}.
We frequently use the following lemma.
\begin{lemma}
	\label{lem:Chebyshev_1-c/n_properties}
	Let $n , d \in \mathbb{N}$ such that $d \leq n$. Then,
	\begin{enumerate}
	    \item For all $c\in [0, n]$,
	$$
	T^2_{ d}\left(-1-\frac{c}{n}\right) \geq \frac{1}{4}\left(-1-\sqrt{\frac{2c}{n}}\right)^{2d }$$
	and
	$$
	T^2_{ d}\left(-1-\frac{c}{n}\right) \leq \left(-1-2\sqrt{\frac{2c}{n}}\right)^{2d }.
	$$
	Moreover, for constant $c$ and $n$ big enough, 
	$$
	T^2_{ d}\left(-1-\frac{c}{n}\right) \leq \left(-1-\sqrt{\frac{2c+1}{n}}\right)^{2d }.
	$$
	\item For all $c\in (n, \infty)$, $T^2_{ d}\left(-1-\frac{c}{n}\right) \leq  \left(-1-3\frac{c}{n}\right)^{2d }$.
	\end{enumerate}
\end{lemma}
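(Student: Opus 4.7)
The plan is to work from the explicit closed form of the Chebyshev polynomial of the first kind on the region $|x|\geq 1$, namely
$$
T_d(x)=\tfrac{1}{2}\bigl[(x+\sqrt{x^2-1})^d+(x-\sqrt{x^2-1})^d\bigr],
$$
and specialize to $x=-1-c/n$, which lies in $(-\infty,-1]$ whenever $c\geq 0$. A direct computation gives $x^2-1=(c/n)(2+c/n)$, so with
$$
a \;:=\; \bigl|x+\sqrt{x^2-1}\bigr|\;=\;1+c/n-\sqrt{(c/n)(2+c/n)},\qquad
b \;:=\; \bigl|x-\sqrt{x^2-1}\bigr|\;=\;1+c/n+\sqrt{(c/n)(2+c/n)},
$$
the identity $(x+\sqrt{x^2-1})(x-\sqrt{x^2-1})=1$ yields $ab=1$, hence $a\leq 1\leq b$. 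Since $x\pm\sqrt{x^2-1}$ are both negative, their $d$-th powers have a common sign and
$$
T_d^2\bigl(-1-c/n\bigr)\;=\;\tfrac{1}{4}(a^d+b^d)^2.
$$
From $ab=1$ I get the pair of elementary sandwich bounds
$$
\tfrac{1}{4}\,b^{2d}\;\leq\;T_d^2(-1-c/n)\;\leq\;b^{2d},
$$
the lower bound by dropping $a^d\geq 0$ and the upper bound by using $a^d\leq 1\leq b^d$ so that $a^d+b^d\leq 2b^d$.

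From here each inequality reduces to a purely scalar estimate on $b$. For the lower bound in Part 1 I would use $2+c/n\geq 2$ to conclude $b\geq 1+\sqrt{2c/n}$, yielding the required $\tfrac{1}{4}(1+\sqrt{2c/n})^{2d}$. For the corresponding upper bound I would substitute $t=\sqrt{c/n}\in[0,1]$, so that $b\leq 1+t^2+\sqrt{3}\,t$; the inequality $t^2+\sqrt{3}\,t\leq 2\sqrt{2}\,t$ is equivalent to $t\leq 2\sqrt{2}-\sqrt{3}\approx 1.096$, which holds on the entire range $t\in[0,1]$, and this gives $b\leq 1+2\sqrt{2c/n}$. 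For Part 2, where $c>n$ so $c/n>1$, the bound $\sqrt{(c/n)(2+c/n)}\leq 2c/n$ is equivalent to $2+c/n\leq 4c/n$, i.e.\ $c/n\geq 2/3$, which is immediate; therefore $b\leq 1+3c/n$.

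The only slightly delicate piece is the refined upper bound for constant $c$ and large $n$, where the target $(1+\sqrt{(2c+1)/n})^{2d}$ has a nontrivial lower-order term. Here I would expand $b=1+\sqrt{2c/n}\sqrt{1+c/(2n)}+c/n$ via $\sqrt{1+c/(2n)}=1+c/(4n)+O(n^{-2})$, giving $b=1+\sqrt{2c/n}+c/n+O(n^{-3/2})$, while
$$
\sqrt{(2c+1)/n}-\sqrt{2c/n}\;=\;\frac{1/n}{\sqrt{(2c+1)/n}+\sqrt{2c/n}}\;=\;\frac{1}{2\sqrt{2c\,n}}\bigl(1+o(1)\bigr).
$$
Comparing these two expansions, the excess of $b$ over $1+\sqrt{2c/n}$ is of order $c/n$, while the slack provided by $1+\sqrt{(2c+1)/n}$ is of order $1/\sqrt{n}$, so the desired inequality holds once $n\geq 8c^3$ or any similar explicit threshold. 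This asymptotic matching is the one step that is not purely algebraic and will be the technical crux; the rest of the lemma is a direct consequence of the factorization above and monotonicity in $b$.
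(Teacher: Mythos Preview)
Your proposal is correct and follows essentially the same approach as the paper: both proofs start from the closed form $T_d(x)=\tfrac12[(x+\sqrt{x^2-1})^d+(x-\sqrt{x^2-1})^d]$ for $|x|\ge 1$, reduce to the sandwich $\tfrac14 b^{2d}\le T_d^2(-1-c/n)\le b^{2d}$ with $b=1+c/n+\sqrt{2c/n+c^2/n^2}$, and then bound $b$ by elementary estimates in each regime. Your bounds on $b$ differ only cosmetically from the paper's (e.g.\ you use $t^2+\sqrt{3}t\le 2\sqrt{2}t$ for $t\le 1$ where the paper uses $c/n\le\sqrt{c/n}$ and $c^2/n^2\le c/n$), and you give a more explicit asymptotic justification of the refined upper bound than the paper, which simply asserts it ``holds for $n$ large compared to $c$.''
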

\begin{proof}
    It holds that:
	\begin{enumerate}
	    \item Consider the characterization of Chebyshev polynomials for $x \geq |1|$ given in \cite[Equation 1.12]{Rivlin74}:
$$
		T_d(x)=\frac{1}{2}\left( \left( x-\sqrt{x^2-1}\right)^d + \left( x+\sqrt{x^2-1}\right)^d\right).
$$
	For $x =-1-\frac{c}{n}$ and $c  \in [0, n]$, we have
	$$
	T^2_{ d}\left(-1-\frac{c}{n}\right) \geq \frac{1}{4} \left( \left(-1-\frac{c}{n}\right)-\sqrt{\left(-1-\frac{c}{n}\right)^2-1}\right)^{ 2d} 
	\geq \frac{1}{4}\left(-1-\sqrt{\frac{2c}{n}}\right)^{2d }
	$$
	and
	\begin{align}
		T^2_{ d}\left(-1-\frac{c}{n}\right) \leq & \left( \left(-1-\frac{c}{n}\right)-\sqrt{\left(-1-\frac{c}{n}\right)^2-1}\right)^{ 2d} \nonumber \\
		& \leq  \left( \left(-1-\frac{c}{n}\right)-\sqrt{\frac{2c}{n} +\frac{c^2}{n^2}  }\right)^{ 2d} \nonumber \\
		& \leq  \left( -1-\sqrt{\frac{c}{n}}-\sqrt{\frac{2c}{n} +\frac{c}{n}  }\right)^{ 2d} \leq  \left(-1-2\sqrt{\frac{2c}{n}}\right)^{2d }.
	\end{align}
	Moreover, we have
	$$
	T^2_{ d}\left(-1-\frac{c}{n}\right) \leq  \left( \left(-1-\frac{c}{n}\right)-\sqrt{\left(-1-\frac{c}{n}\right)^2-1}\right)^{ 2d} 
	\leq \left(-1-\sqrt{\frac{2c+1}{n}}\right)^{2d },
	$$
	where the last inequality holds for $n$ large compared to $c$.
	\item For $x =-1-\frac{c}{n}$ and $c  \in (n, \infty)$, we have
	\begin{align}
		T^2_{ d}\left(-1-\frac{c}{n}\right) \leq & \left( \left(-1-\frac{c}{n}\right)-\sqrt{\left(-1-\frac{c}{n}\right)^2-1}\right)^{ 2d} \nonumber \\
		& \leq  \left( \left(-1-\frac{c}{n}\right)-\sqrt{\frac{2c}{n} +\frac{c^2}{n^2}  }\right)^{ 2d} \nonumber \\
		& \leq  \left( -1-\frac{c}{n}-\sqrt{\frac{2c^2}{n^2} +\frac{c^2}{n^2}  }\right)^{ 2d} \leq  \left(-1-3\frac{c}{n}\right)^{2d }.
	\end{align}
	\end{enumerate}
\end{proof}
\section{SoS rank for SQFs}\label{section:SoSRankForSQFs}
In this section, we refute the LPdWY conjecture stated in~\cite{LeePWY16} by proving Theorem~\ref{thm:SQF_SoS_degree}.
To prove Theorem~\ref{thm:SQF_SoS_degree}, it is sufficient to prove the following theorem. 
\begin{theorem}
\label{thm:exists_s(x)_st_q(x)-s(x)geq0}
	For all $n \in \mathbb{N}$ and all $k \in [n]$, there exists a univariate polynomial $s(x)$ of degree $O(\sqrt{kn}\log(n))$ such that 
	\begin{enumerate}
		\item $s\left(\sum_{i=1}^{n}{x_i}\right)$ is a sum of squares (modulo the Boolean axioms).
		\item For all $x \in [0,n]$, $(x-k+1)(x-k) - s(x) \geq 0$.
	\end{enumerate}
\end{theorem}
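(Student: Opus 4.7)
The plan is to construct $s(x)$ as a one-sided Chebyshev-based lower approximation to $q(x) := (x-k+1)(x-k)$ on $[0,n]$. The natural candidate has the form
$$s(x) := q(x) - \alpha \cdot q(x)^2 \cdot T_d^2(\phi(x)),$$
where $T_d$ is the degree-$d$ Chebyshev polynomial of the first kind, $\phi$ is an affine reparameterization of $[0,n]$, $\alpha > 0$ is a scaling constant, and $d = \Theta(\sqrt{nk})$ is to be tuned. With this choice, condition (2) of the theorem is immediate, since $q(x) - s(x) = \alpha \, q(x)^2 \, T_d^2(\phi(x))$ is a product of squares and hence pointwise nonneg on $[0, n]$.

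The main obstacle is condition (1), i.e., showing that $s(|\mathbf{x}|)$ is a sum of squares modulo the Boolean axioms. The strategy is to lift a univariate SoS decomposition via Theorem~\ref{thm:Blekherman_TheSecond} combined with the easily verified identity that $|\mathbf{x}|(n - |\mathbf{x}|)$ equals $\sum_{1 \le i < j \le n}(x_i - x_j)^2$ modulo the Boolean axioms, so that $|\mathbf{x}|(n-|\mathbf{x}|)$ itself is an SoS modulo Boolean axioms of degree $2$. Concretely, I would exhibit a univariate polynomial $\tilde s$ of degree $O(\sqrt{nk}\log n)$ that is nonneg on $[0,n]$ and satisfies $\tilde s(j) = s(j)$ for every $j \in \{0, 1, \ldots, n\}$. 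Theorem~\ref{thm:Blekherman_TheSecond} applied to $\tilde s$ then yields a decomposition $\tilde s(x) = \sigma_0(x) + x(n-x)\sigma_1(x)$ with SoS $\sigma_i$ of controlled degree, and upon substitution $x = |\mathbf{x}|$ this becomes an SoS-modulo-Boolean-axioms polynomial that agrees with $s(|\mathbf{x}|)$ on $\{0,1\}^n$, hence equals it modulo the Boolean axioms.

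The technical core is verifying $s(j) \geq 0$ for every integer $j \in \{0, \ldots, n\}$, which via $s(j) = q(j)\bigl(1 - \alpha \, q(j) \, T_d^2(\phi(j))\bigr)$ reduces to $\alpha \, q(j) \, T_d^2(\phi(j)) \leq 1$ (automatic at $j \in \{k-1, k\}$). I would choose $\phi$ so that the integer images $\phi(j)$ closest to the boundary of $[-1,1]$ occur for $j$ near $k$, where $q(j)$ is small, while $\phi(j)$ moves further outside $[-1,1]$ as $|j - k|$ grows, but only by an amount that the Chebyshev bounds from Lemma~\ref{lem:Chebyshev_1-c/n_properties} are able to control against the polynomial growth $q(j) = O((j-k)^2)$. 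The extra factor of $\log n$ arises in the construction of $\tilde s$: the negative trough of $s$ on the small interval $(k-1, k)$ must be "filled in" by a nonneg SoS polynomial without perturbing its integer values, and I anticipate a dyadic layered construction in the distance $|j - k|$ contributing $O(\log n)$ Chebyshev factors, each of degree $\Theta(\sqrt{nk})$. The principal difficulty is coordinating these simultaneous constraints so that all integer values are matched, the pointwise bound $s \leq q$ on $[0,n]$ holds, and the total degree stays at $O(\sqrt{nk}\log n)$.
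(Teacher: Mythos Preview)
Your approach has a genuine gap at the interpolation step. You want a univariate $\tilde s$ of degree $O(\sqrt{nk}\log n)$ that is nonnegative on $[0,n]$ and satisfies $\tilde s(j)=s(j)$ for every integer $j\in\{0,\dots,n\}$. But your $s(x)=q(x)-\alpha q(x)^2T_d^2(\phi(x))$ already has degree $O(\sqrt{nk})$, so both $s$ and $\tilde s$ have degree at most $n$ (for any $k$ with $\sqrt{nk}\log n\le n$, in particular for all small $k$). Two polynomials of degree $\le n$ agreeing at the $n+1$ points $0,1,\dots,n$ are identical; hence $\tilde s=s$. Since $s(x)=q(x)\bigl(1-\alpha q(x)T_d^2(\phi(x))\bigr)$ and $q<0$ on $(k-1,k)$ while the bracket exceeds $1$ there, $s$ is strictly negative on $(k-1,k)$, so $\tilde s$ cannot be nonnegative on $[0,n]$ and Theorem~\ref{thm:Blekherman_TheSecond} does not apply. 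The proposed ``filling in'' of the trough on $(k-1,k)$ cannot be done without changing the integer values or pushing the degree above $n$.

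The deeper issue is that the only Boolean structure you invoke is the identity for $|\mathbf{x}|(n-|\mathbf{x}|)$, which certifies exactly nonnegativity on the whole interval $[0,n]$. That is not enough here: the very reason $q_k$ is hard is that it is negative on $(k-1,k)$. The paper exploits a different Boolean fact, namely that the falling factorial $\prod_{i=0}^{2k-1}(|\mathbf{x}|-i)$ is SoS modulo the Boolean axioms (Lemma~4.4 of \cite{LeePWY16}), even though it is not nonnegative on $[0,n]$. It then writes $s=s_1\cdot s_2$, where $s_1$ (built from this falling factorial times an even power of $x(x-2k+1)$) is SoS modulo the axioms and handles the sign change on $[0,2k-1]$, and $s_2$ is a genuine univariate SoS of degree $O(\sqrt{nk}\log n)$, obtained by composing a squared Chebyshev polynomial $H_k$ with a bump polynomial and raising to power $\Theta(k\log n)$, which damps $s_1$ on $[2k-1,n]$. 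The $\log n$ factor comes from this damping power, not from a dyadic layering.
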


Indeed, by Theorem~\ref{thm:exists_s(x)_st_q(x)-s(x)geq0} and Theorem 6, there exist sum of squares polynomials $s,~s_1$ and $s_2$ of degree $O(\sqrt{kn} \log(n))$ s.t.
$$
(x-k+1)(x-k) = s(x) + s_1(x) +s_2(x)x(n-x).
$$
We now make the following observations:
\begin{enumerate}
    \item By Theorem 8, $s(\sum_{i=1}^n x_i)$ is a sum of squares polynomial modulo the Boolean axioms.
    \item $s_1(\sum_{i=1}^n x_i),~s_2(\sum_{i=1}^n x_i)$ are sum of squares polynomials.
    \item $\sum_{i=1}^n x_i  = \sum_{i=1}^n x_i^2 - \sum_{i=1}^n\left(x_i^2- x_i\right)$ is a sum of squares polynomial modulo the Boolean axioms.
    \item $n-\sum_{i=1}^n x_i = \sum_{i=1}^n \left(1-x_i  \right) = \sum_{i=1}^n \left( \left( x_i -1  \right)^2 - \left(x_i^2 -x_i \right) \right)$ is a sum of squares polynomial modulo the Boolean axioms.
\end{enumerate}
Putting everything together, the multivariate polynomial $q_k(\mathbf{x})$ has an $O(\sqrt{kn} \log(n))$ SoS certificate modulo the Boolean axioms of the form
$$
q_k\left(\mathbf{x}\right) =  s\left(\sum_{i=1}^n x_i\right) + s_1\left(\sum_{i=1}^n x_i\right) +s_2\left(\sum_{i=1}^n x_i\right)\left( \sum_{i=1}^n x_i \right) \left(n-\sum_{i=1}^n x_i\right).
$$
Before we prove Theorem~\ref{thm:exists_s(x)_st_q(x)-s(x)geq0}, we make the following observation which shows that our upper bound for $q_k(x)$ applies for any symmetric quadratic function with roots in $[k-1,k]$.
\begin{corollary}
	\label{cor:SQF_SoS_degree_closer_roots_crude_bound}
	For any $k \in \{1, \ldots , \lceil n/2 \rceil\}$ and any $a\leq b \in [k-1,k]$, a polynomial
	$$
		f_k:=(x-a)(x-b)
	$$
	admits an SoS certificate over the Boolean hypercube of degree at most the degree of an SoS certificate over the Boolean hypercube for polynomial $q_k$.
\end{corollary}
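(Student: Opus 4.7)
My plan is to exhibit an explicit convex combination decomposition of $f_k$ that expresses it in terms of $q_k$ and two trivially SoS quadratics, so that any SoS certificate of $q_k$ transfers to one for $f_k$ with no loss of degree. The guiding observation is that $(|\mathbf{x}|-k+1)^2$ and $(|\mathbf{x}|-k)^2$ are each single squares of linear forms, hence lie in $\Sigma_{n,1}$ and cost essentially nothing in the certificate budget, while $q_k$ absorbs all the ``hard'' part of the certification.

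First I would parameterize $a = k-1+\alpha$ and $b = k-1+\beta$ with $\alpha,\beta \in [0,1]$, and substitute $y = x - k + 1$. The core step is to verify the univariate identity
\begin{equation*}
(y-\alpha)(y-\beta) \;=\; (1-\alpha)(1-\beta)\, y^2 \;+\; (\alpha+\beta-2\alpha\beta)\, y(y-1) \;+\; \alpha\beta\, (y-1)^2,
\end{equation*}
by matching the coefficients of $y^2$, $y$ and the constant term (the three coefficients even sum to $1$). Translated back into $x$, this reads
\begin{equation*}
f_k(x) \;=\; (1-\alpha)(1-\beta)\,(x-k+1)^2 \;+\; (\alpha+\beta-2\alpha\beta)\, q_k(x) \;+\; \alpha\beta\,(x-k)^2,
\end{equation*}
and each coefficient is manifestly nonnegative for $\alpha,\beta \in [0,1]$, since $\alpha+\beta-2\alpha\beta = \alpha(1-\beta)+\beta(1-\alpha) \geq 0$.

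To finish, I would lift to the multivariate setting by replacing $x$ with $|\mathbf{x}|$, giving $f_k(\mathbf{x}) = \lambda_1 (|\mathbf{x}|-k+1)^2 + \lambda_2\, q_k(\mathbf{x}) + \lambda_3 (|\mathbf{x}|-k)^2$ with all $\lambda_i \geq 0$. Given any certificate $q_k = \sigma_0 + \sum_i t_i(x_i^2 - x_i)$ witnessing $q_k \in \Sigma_{n,d}^{\cH}$, scaling it by the nonnegative constant $\lambda_2$ preserves the SoS structure of $\sigma_0$ and of the multipliers $t_i$; moreover the two extra terms $\lambda_1 (|\mathbf{x}|-k+1)^2$ and $\lambda_3(|\mathbf{x}|-k)^2$ are single squares of degree-$1$ polynomials, so they can be absorbed into the $s_0$-slot of $\Sigma_{n,d}^{\cH}$ without raising the degree. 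Summing the three contributions exhibits $f_k$ as an element of $\Sigma_{n,d}^{\cH}$, which is exactly the claimed bound.

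The main difficulty is really just identifying the correct three ``basis'' quadratics $\{(x-k+1)^2,\, q_k(x),\, (x-k)^2\}$ against which any $(x-a)(x-b)$ with $a,b \in [k-1,k]$ has nonnegative coordinates; once the decomposition is in hand everything reduces to a routine verification, and no genuinely new SoS reasoning is required beyond what is already invested in certifying $q_k$ itself.
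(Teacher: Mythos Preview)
Your proposal is correct and is essentially the same as the paper's proof: both arrive at the identical decomposition $f_k(x)=(1-\alpha)(1-\beta)(x-k+1)^2+(\alpha+\beta-2\alpha\beta)\,q_k(x)+\alpha\beta\,(x-k)^2$ with nonnegative coefficients, the paper obtaining it by writing each linear factor $x-a,\ x-b$ as a convex combination of $x-k+1$ and $x-k$, and you by a substitution $y=x-k+1$ followed by coefficient matching. The conclusion---that the two perfect-square terms cost nothing and the $q_k$ term inherits its certificate---is the same in both.
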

\begin{proof}
	We have $f_k(x) \geq \left((k-a)(b-k+1) + (k-b)(a-k+1)\right)q_k(x)$ as
	{\small
    \begin{align*}
    &(|x| - a)(|x| - b) \\
    &= \left((k-a)(|x|- k + 1) + (a-k+1)(|x| - k)\right)\left((k-b)(|x|- k + 1) + (b-k+1)(|x| - k)\right) \\
    &= (k-a)(k-b)(|x|- k + 1)^2 + (a-k+1)(b-k+1)(|x|- k)^2 \\
    &+\left((k-a)(b-k+1) + (k-b)(a-k+1)\right)(|x| - k + 1)(x - |k|)
    \end{align*}	}
	and invoke Theorem~\ref{thm:SQF_SoS_degree} to conclude the proof.
\end{proof}

\subsection{Proof of Theorem~\ref{thm:exists_s(x)_st_q(x)-s(x)geq0}}
We construct $s(x)$ in two steps. We first construct a polynomial $s_1(x)$ which is a sum of squares (modulo the Boolean axioms), is less than or equal to $(x-k+1)(x-k)$ on the interval $[0,2k-1]$, and is not too large on the interval $[2k-1,n]$. We then construct a polynomial $s_2(x)$ which is a sum of squares, is less than or equal to $1$ on the intervals $[0,k-1]$ and $[k,2k-1]$, is greater than or equal to $1$ on the interval $[k-1,k]$, and is very small on the interval $[2k-1,n]$. We then take $s(x) = s_1(x)s_2(x)$.
More precisely, we have the following conditions on $s_1$ and $s_2$:
\begin{enumerate}
	\item $s_1\left(\sum_{i=1}^{n}{x_i}\right)$ is a sum of squares (modulo the Boolean axioms) and $s_2(x)$ is a sum of squares.
	\item For all $x \in [k-1,k]$, $\frac{s_1(x)}{(x-k+1)(x-k)} \geq 1$ and $s_2(x) \geq 1$.
	\item For all $x \in [0,k-1] \cup [k,2k-1]$, $\frac{s_1(x)}{(x-k+1)(x-k)} \leq 1$ and $s_2(x) \leq 1$.
	\item For all $x \in [2k-1,n]$, $\left|\frac{s_1(x)}{(x-k+1)(x-k)}\right| \leq n^{40k}$ and $s_2(x) \leq n^{-40k}$.
	\item $s_1(x)$ has degree $O(k)$ and $s_2(x)$ has degree $O(\sqrt{nk}\log(n))$.
\end{enumerate}
\begin{proposition}
	If $s_1(x)$ and $s_2(x)$ satisfy the above conditions and we take $s(x) = s_1(x)s_2(x)$ then $s\left(\sum_{i=1}^{n}{x_i}\right)$ is a sum of squares (modulo the Boolean axioms) and for all $x \in [0,n]$, $(x-k+1)(x-k) - s(x) \geq 0$.
\end{proposition}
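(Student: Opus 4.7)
The plan is to decompose the conclusion into its two assertions and handle them independently: the algebraic claim that $s\!\left(\sum_i x_i\right)$ is SOS modulo the Boolean axioms, and the analytic inequality $(x-k+1)(x-k)-s(x)\geq 0$ on $[0,n]$.

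For the algebraic part, I would use that the product of a plain SOS polynomial with an SOS-modulo-Boolean polynomial is itself SOS modulo Boolean. By condition 1, $s_1\!\left(\sum_i x_i\right)$ admits a decomposition of the form $\sigma(\mathbf{x}) + \sum_i h_i(\mathbf{x})(x_i^2-x_i)$ with $\sigma$ a genuine SOS, and $s_2\!\left(\sum_i x_i\right)$ is a plain SOS because a univariate identity $s_2(y)=\sum_j p_j(y)^2$ yields the multivariate SOS $\sum_j p_j(\sum_i x_i)^2$. Expanding the product then gives $s(\sum_i x_i)=\sigma\cdot s_2(\sum_i x_i)+\sum_i h_i\cdot s_2(\sum_i x_i)\cdot(x_i^2-x_i)$, where the first summand is SOS (product of two SOS) and the remaining terms lie in the ideal generated by $\cH$.

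For the pointwise inequality, I would do a case analysis on the four intervals appearing in conditions 2--4, using throughout that $s_2(x)\geq 0$ on $\R$ because $s_2$ is a univariate SOS. On $[k-1,k]$ the parabola $(x-k+1)(x-k)$ is nonpositive, and condition 2 (lower bound on the signed ratio with negative denominator) forces $s_1(x)\leq (x-k+1)(x-k)\leq 0$; multiplying by $s_2(x)\geq 1$ keeps the product at or below the parabola. On $[0,k-1]\cup[k,2k-1]$ the parabola is nonnegative and $s_2(x)\in[0,1]$; splitting on the sign of $s_1(x)$ handles both subcases — if $s_1(x)\geq 0$, then $s_1(x)s_2(x)\leq s_1(x)\leq(x-k+1)(x-k)$ by condition 3, and if $s_1(x)<0$, then $s_1(x)s_2(x)\leq 0\leq(x-k+1)(x-k)$. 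On $[2k-1,n]$ the bounds in condition 4 combine multiplicatively: $|s_1(x)s_2(x)|\leq n^{40k}(x-k+1)(x-k)\cdot n^{-40k}=(x-k+1)(x-k)$, which in particular gives $s(x)\leq(x-k+1)(x-k)$.

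The only subtlety worth flagging is that $s_1$ is guaranteed to be SOS only \emph{modulo Boolean}, so at non-integer arguments $s_1$ can take negative values; this is exactly why condition 2 is stated as a lower bound on the \emph{signed} ratio while conditions 3 and 4 use absolute/upper bounds, and why $s_2$ is designed as an honest SOS that can safely be multiplied across the sign change of $(x-k+1)(x-k)$ at $k-1$ and $k$. Beyond this, the proposition is essentially a bookkeeping exercise; I expect all the real difficulty of the section to sit not here but in the later explicit constructions of $s_1$ and $s_2$ meeting the five conditions with the claimed $O(\sqrt{nk}\log(n))$ degree bound.
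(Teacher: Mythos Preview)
Your proposal is correct and follows essentially the same approach as the paper: both establish the SOS claim by multiplying an SOS-modulo-Boolean polynomial by a genuine SOS, and both handle the inequality by the same four-interval case analysis using conditions 2--4. The only cosmetic difference is that the paper writes each case uniformly as $(x-k+1)(x-k)\bigl(1 - s_2(x)\tfrac{s_1(x)}{(x-k+1)(x-k)}\bigr)\geq 0$ and argues via the ratio, whereas you work directly with signs and products (splitting on $\operatorname{sgn}(s_1)$ in the middle case); your version has the mild advantage of sidestepping the division-by-zero issue at $x=k-1,k$.
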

\begin{proof}
	We make the following observations:
	\begin{enumerate}
		\item Since $s_1\left(\sum_{i=1}^{n}{x_i}\right)$ is a sum of squares (modulo the Boolean axioms) and $s_2(x)$ is a sum of squares, the product $s\left(\sum_{i=1}^{n}{x_i}\right) = s_1\left(\sum_{i=1}^{n}{x_i}\right)s_2\left(\sum_{i=1}^{n}{x_i}\right)$ is a sum of squares (modulo the Boolean axioms).
		\item For all $x \in [0,k-1] \cup [k,2k-1]$, since $(x-k+1)(x-k) \geq 0$, $\frac{s_1(x)}{(x-k+1)(x-k)} \leq 1$, and $0 \leq s_2(x) \leq 1$, 
		\[
		(x-k+1)(x-k) - s(x) = (x-k+1)(x-k)\left(1 - s_2(x)\frac{s_1(x)}{(x-k+1)(x-k)}\right) \geq 0.
		\]
		\item For all $x \in [k-1,k]$, since $(x-k+1)(x-k) \leq 0$, $\frac{s_1(x)}{(x-k+1)(x-k)} \geq 1$,  and $s_2(x) \geq 1$, 
		\[
		(x-k+1)(x-k) - s(x) = (x-k+1)(x-k)\left(1 - s_2(x)\frac{s_1(x)}{(x-k+1)(x-k)}\right) \geq 0.
		\]
		\item For all $x \in [2k-1,n]$, since $(x-k+1)(x-k) \geq 0$, $\left|\frac{s_1(x)}{(x-k+1)(x-k)}\right| \leq n^{40k}$ and $|s_2(x)| \leq n^{-40k}$,
		\[
		(x-k+1)(x-k) - s(x) = (x-k+1)(x-k)\left(1 - s_2(x)\frac{s_1(x)}{(x-k+1)(x-k)}\right) \geq 0.
		\]
	\end{enumerate}
\end{proof}
Thus, we have an SoS proof of degree $O(\sqrt{kn}log(n))$ that $(|x|-k+1)(|x|-k) \geq 0$.
\subsubsection{Constructing the polynomial $s_1(x)$}
We now construct the polynomial $s_1(x)$.
\begin{lemma}
	For $n\in \N$ and all $k \in [n]$, there exists a polynomial $s_1(x)$ such that
	\begin{enumerate}
		\item $s_1\left(\sum_{i=1}^{n}{x_i}\right)$ has a degree $O(k)$ sum of squares (modulo the Boolean axioms) certificate.
		\item For all $x \in [k-1,k]$, $\frac{s_1(x)}{(x-k+1)(x-k)} \geq 1$.
		\item For all $x \in [0,k-1] \cup [k,2k-1]$, $\frac{s_1(x)}{(x-k+1)(x-k)} \leq 1$.
		\item For all $x \in [2k-1,n]$, $\left|\frac{s_1(x)}{(x-k+1)(x-k)}\right| \leq n^{40k}$.
	\end{enumerate}
\end{lemma}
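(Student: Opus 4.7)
The plan is to take $s_1(x) := q_k(x) \cdot p(x)^2 / M$, where
\[
p(x) \;:=\; \prod_{\substack{j = 0 \\ j \neq k-1,\, k}}^{2k-1}(x - j)
\]
is the polynomial of degree $2k-2$ whose roots are the integers in $\{0, 1, \ldots, 2k-1\}$ other than $k-1$ and $k$, and $M > 0$ is a positive normalization constant. Then $s_1$ has degree $4k-2 = O(k)$. The key algebraic identity that will drive the SoS certificate is
\[
q_k(x) \cdot p(x) \;=\; \prod_{j=0}^{2k-1}(x - j).
\]

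For the pointwise conditions, I would choose $M := ((k-1)!\,k!)^2 = p(k-1)^2$. Since $s_1/q_k = p^2/M \geq 0$ and since the roots of $p$ are symmetric about $x = k - \tfrac{1}{2}$, $p(x)^2$ attains its minimum on $[k-1, k]$ at the endpoints $x = k-1$ and $x = k$ (where $p^2 = M$ exactly), while on $[0, k-1] \cup [k, 2k-1]$ it attains its maximum value $M$ also at these same endpoints. This yields conditions (2) and (3). Condition (4) follows from the crude bound $|p(x)| \leq n^{2k-2}$ on $[2k-1, n]$, which gives $|s_1/q_k| \leq n^{40k}$ trivially. The main obstacle I anticipate here is rigorously verifying $\max_{[0,k-1] \cup [k,2k-1]} p(x)^2 \leq p(k-1)^2$; by factoring $p(x) = \prod_{j=0}^{k-2}(x-j)(x-(2k-1-j))$ and substituting $u = x - (k-\tfrac{1}{2})$, this reduces to showing $|\phi(t)| \leq \phi(\tfrac{1}{4})$ on $t \in [\tfrac{1}{4}, (k-\tfrac{1}{2})^2]$, where $\phi(t) := \prod_{j=0}^{k-2}\bigl((k-\tfrac{1}{2}-j)^2 - t\bigr)$ is a polynomial of degree $k-1$ whose roots are exactly $a_j^2 := (k-\tfrac{1}{2}-j)^2$ for $j = 0, \ldots, k-2$, and careful interval analysis between consecutive roots $a_{j+1}^2$ and $a_j^2$ will be needed.

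The main step is showing that $s_1(|\mathbf{x}|)$ is SoS of degree $O(k)$ modulo the Boolean axioms. On the Boolean hypercube we have $\prod_{j=0}^{2k-1}(|\mathbf{x}| - j) = (2k)!\,\sigma_{2k}(\mathbf{x}) = (2k)! \sum_{|S|=2k} \prod_{i \in S} x_i$, so combining with the factorization identity above gives
\[
M \cdot s_1(|\mathbf{x}|) \;=\; (2k)! \sum_{|S|=2k} \prod_{i \in S} x_i \cdot p(|\mathbf{x}|).
\]
For each $S$ of size $2k$, the modulo-Boolean identity $\prod_{i \in S} x_i \cdot p(|\mathbf{x}|) = \prod_{i \in S} x_i \cdot \bar{p}(|\mathbf{x} \cap \bar{S}|)$ holds, where
\[
\bar{p}(y) \;:=\; p(2k + y) \;=\; \prod_{m=1}^{k-1}(y+m) \prod_{m=k+2}^{2k}(y+m)
\]
is a product of linear factors strictly positive on $[0, n-2k]$. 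By Theorem 8 applied to $\bar p$ on $[0, n-2k]$, we may write $\bar p(y) = \sigma(y) + y(n-2k-y)\tau(y)$ with univariate SoS polynomials $\sigma, \tau$ of degrees at most $2k-2$ and $2k-4$, respectively. Using the modulo-Boolean identities $\prod_{i \in S} x_i = (\prod_{i \in S} x_i)^2$ and $|\mathbf{x} \cap \bar{S}|(|\bar{S}| - |\mathbf{x} \cap \bar{S}|) = \sum_{i \neq j \in \bar{S}}(x_i(1-x_j))^2$, each term $\prod_{i \in S} x_i \cdot \bar p(|\mathbf{x} \cap \bar{S}|)$ is seen to equal a sum of squares of polynomials of degree at most $3k$ modulo Boolean axioms, giving an SoS certificate of degree $O(k)$ for $s_1$.
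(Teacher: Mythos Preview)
Your construction has a genuine gap: property~(3) is simply false for your choice of $s_1$. You need $p(x)^2 \leq p(k-1)^2$ for all $x\in[0,k-1]\cup[k,2k-1]$, but this fails already for $k=6$: with $p(x)=x(x-1)(x-2)(x-3)(x-4)(x-7)(x-8)(x-9)(x-10)(x-11)$ one computes $|p(5)|=5!\cdot6!=86400$ while $|p(0.5)|\approx 135600$, so $p(0.5)^2>p(5)^2$. For $k=10$ the ratio $|p(0.5)|/|p(k-1)|$ already exceeds $70$, and asymptotically it grows like $4^k/\text{poly}(k)$. The reason is that although the block $\prod_{j=0}^{k-2}(x-j)$ is small near $x\in[0,1]$, the other block $\prod_{j=k+1}^{2k-1}(x-j)$ grows in magnitude as $x$ moves left away from its roots, and this growth dominates. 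Your proposed ``careful interval analysis'' of $\phi$ between consecutive roots cannot succeed, because the inequality you are trying to prove is not true.

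This is exactly the obstacle the paper's construction is designed to overcome. The paper takes $g_k(x)=b_k(x)^{16k}\,p(x)$ with the damping factor $b_k(x)=x(2k-1-x)$, and then sets $s_1$ to be (essentially) $q_k\cdot g_k/g_k(k-1)$. Their Lemma on $a_k=p$ shows only $|p(x-m)/p(x)|\le e^{16m^2/k}$ for $x\in(k-2,k-1)$ --- note this bound is \emph{larger} than $1$, confirming that $|p|$ can grow as you move left --- while the companion lemma shows $|b_k(x-m)/b_k(x)|\le e^{-m^2/k^2}$, so that raising $b_k$ to the power $16k$ exactly cancels the growth of $p$. Without a mechanism of this sort your $s_1/q_k=p^2/M$ cannot simultaneously satisfy (2) and (3): any normalization $M$ large enough to force (3) will make the ratio drop below $1$ on $[k-1,k]$, violating (2). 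Your SoS argument for property~(1) looks workable, but it is moot given the analytic failure.
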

\begin{proof}
	For $k = 1$, we can take $s_1(x) = x(x-1)$ so we can assume that $n \geq k \geq 2$. For $k \geq 2$, we use the following construction.\footnote{Definitions~\ref{def:g_k} and~\ref{def:s_k} are only used in the current section, Section~\ref{section:SoSRankForSQFs}.}
	\begin{definition}\label{def:g_k}
		For all natural numbers $k \geq 2$, define $g_k(x)$ to be the polynomial 
		\[
		g_k(x) = x^{16k}(x - 2k + 1)^{16k}\prod_{i \in \{0,\ldots,2k-1\} \setminus \{k-1,k\}} (x-i).
		\]
	\end{definition}
	\begin{definition}\label{def:s_k}
		Given a natural number $n$ and $k \in \{2,3,\ldots,n\}$, we define $s_1(x)$ as follows:
		\begin{enumerate}
			\item If $k$ is odd, then we define $s_1(x) = \frac{g_k(x)}{g_k(k-1)}(x-k+1)(x-k)$.
			\item If $k$ is even, then we define $s_1(x) = -\frac{g_k(x)(x+1)(x-2k)}{g_k(k-1)k(k+1)}(x-k+1)(x-k)$.
		\end{enumerate}
	\end{definition}
	We verify the desired properties. We first show that $s_1\left(\sum_{i=1}^{n}{x_i}\right)$ is a sum of squares (modulo the Boolean axioms). If $k$ is odd, then since $g_k(k-1) > 0$, $\prod_{i=0}^{2k-1} \left(\left(\sum_{i=1}^{n}{x_i}\right)-i\right)$  is a sum of squares (modulo the Boolean axioms), and by~\cite[Lemma 4.4]{LeePWY16}, \[
	s_1\left(\sum_{i=1}^{n}{x_i}\right) = \frac{\left(\sum_{i=1}^{n}{x_i}\right)^{16k}(\left(\sum_{i=1}^{n}{x_i}\right) - 2k + 1)^{16k}}{g_k(k-1)}\prod_{i=0}^{2k-1} \left(\left(\sum_{i=1}^{n}{x_i}\right)-i\right)
	\]
	is a sum of squares (modulo the Boolean axioms). If $k$ is even, then since $g_k(k-1) < 0$, $\left(\sum_{i=1}^{n}{x_i}\right) + 1$ and $\prod_{i=0}^{2k} \left(\left(\sum_{i=1}^{n}{x_i}\right)-i\right)$ are sum of squares (modulo the Boolean axioms),
	\[
	s_1\left(\sum_{i=1}^{n}{x_i}\right) = -\frac{\left(\sum_{i=1}^{n}{x_i}\right)^{16k}(\left(\sum_{i=1}^{n}{x_i}\right) - 2k + 1)^{16k}}{g_k(k-1)k(k+1)}\left(\left(\sum_{i=1}^{n}{x_i}\right) + 1\right)\prod_{i=0}^{2k} \left(\left(\sum_{i=1}^{n}{x_i}\right)-i\right)
	\]
	is a sum of squares (modulo the Boolean axioms). Finally, to argue about the degree, note that by~\cite[Lemma 4.4]{LeePWY16}, $\prod_{i=0}^{2k-1} \left(\left(\sum_{i=1}^{n}{x_i}\right)-i\right)$ has a sum of squares (modulo the Boolean axioms) certificate of degree $2k$ and thus, for all $k$, $s_1\left(\sum_{i=1}^{n}{x_i}\right)$ has a sum of squares (modulo the Boolean axioms) certificate of degree $O(k)$.
	
	For the fourth property, observe that for $x \in [0,n]$, every term in the numerator (except for $(x+1)$ when $k$ is even) has magnitude at most $n$, every term in the denominator has magnitude at least $1$, and there are less than $40k$ terms in the numerator.
	
	The second and third properties follow immediately from the following lemma.
	\begin{lemma}
		For all natural numbers $k \geq 2$, $g_k(x)$ satisfies the following properties:
		\begin{enumerate}
			\item For all $x \in [0,2k-1]$, $g_k(2k-1-x) = g_k(x)$.
			\item For all $x \in [k-1,k]$, $\frac{g_k(x)}{g_k(k-1)} \geq 1$.
			\item For all $x \in [0,k-1] \cup [k,2k-1]$, $\left|\frac{g_k(x)}{g_k(k-1)}\right| \leq 1.$ 
		\end{enumerate}
	\end{lemma}
	\begin{proof}
		Since the first and second properties hold for every term in the product
		$$g_k(x) = (-1)^{k-1}\left(x(x - 2k + 1)\right)^{16k}\left(\prod_{i=0}^{k-2}{(x-i)(2k-1-x-i)}\right),$$
		they hold for $g_k(x)$ as well. 
		
		By symmetry, it suffices to show the third property for $x \in [0,k-1]$. For $x \in \{0,1,\ldots,k-2\}$, $g_k(x) = 0$ and for $x \in (k-2,k-1)$, the third property holds for every term in this product, so it holds for $g_k(x)$ as well. 
		To show that the third property holds for $x \in [0,k-2] \setminus \{0,1,\ldots,k-2\}$, we compare $g_k(x-m)$ and $g_k(x)$, where $x \in (k-2,k-1)$ and $m \in \{0,1,\ldots,k-2\}$. For this, we decompose $g_k(x)$ as $g_k(x) = a_k(x)b_k(x)^{16k}$, where $a_k(x) = \prod_{i \in \{0,\ldots,2k-1\} \setminus \{k-1,k\}} (x-i)$ and $b_k(x) = x(2k-1-x)$.
		\begin{lemma}\label{aklemma}
			Let $a_k(x) = \prod_{i \in \{0,\ldots,2k-1\} \setminus \{k-1,k\}} (x-i) =  \left(\prod_{i=0}^{k-2}{(x-i)}\right)\left(\prod_{i=k+1}^{2k-1}{(x-i)}\right)$. For all $x \in (k-2,k-1)$ and all $m \in \{1,\ldots,k-2\}$, 
			$\left|\frac{a_k(x-m)}{a_k(x)}\right| \leq e^{\frac{16m^2}{k}}$.
		\end{lemma}
		\begin{proof}
            Observe that 
			\begin{align*}
				\left|\frac{a_k(x-m)}{a_k(x)}\right| &= \left|\frac{\prod_{j=1}^{m}{(x-k+2-j)}}{\prod_{j=1}^{m}{(x+1-j)}} \cdot \frac{\prod_{j=1}^{m}{(x-2k+1-j)}}{\prod_{j=1}^{m}{(x-k-j)}}\right| \\
				&= \left|\frac{\prod_{j=1}^{m}{(k-2-x+j)}}{\prod_{j=1}^{m}{(k - x + j)}} \cdot \frac{\prod_{j=1}^{m}{(2k - x-1+j)}}{\prod_{j=1}^{m}{(x -m + j)}}\right| \\ 
				&\leq \left|\prod_{j=1}^{m}{\left(\frac{k+1+j}{k-2-m+j}\right)}\right|. 
			\end{align*}
			We distinguish between two cases.
			\begin{enumerate}
				\item If $m \leq \frac{3k}{4} - 1$, observe that 
				\begin{align*}
					\left|\prod_{j=1}^{m}{\left(\frac{k+1+j}{k-2-m+j}\right)}\right| &= \prod_{j=1}^{m}{\left(1 + \frac{m+3}{k-2-m+j}\right)} \\
					&\leq \prod_{j=1}^{m}{\left(1 + \frac{m+3}{k-m-1}\right)} \leq \prod_{j=1}^{m}{e^{\frac{m+3}{k-m-1}}} = e^{\frac{m(m+3)}{(k - m + 1)}} \leq e^{\frac{16m^2}{k}}.
				\end{align*}
				\item If $m > \frac{3k}{4} - 1$, then $m \geq \frac{3k}{4} - \frac{3}{4} \geq \frac{3k}{8}$ (as $k \geq 2$). Thus,
				\[
				\left|\prod_{j=1}^{m}{\left(\frac{k+1+j}{k-2-m+j}\right)}\right| \leq \prod_{j=1}^{k-2}{\left(\frac{k+1+j}{j}\right)} = \frac{(2k-1)!}{(k-2)!(k+1)!} \leq 2^{2k-1} \leq e^{\frac{16m^2}{k}}.
				\]
			\end{enumerate}
		\end{proof}
		\begin{lemma}\label{bklemma}
			Let $b_k(x) = x(2k-1-x)$. For $x \in (k-2,k-1)$ and $m \in [k-2]$,~$
			\left|\frac{b_k(x-m)}{b_k(x)}\right| \leq e^{-\frac{m^2}{k^2}}.
			$
		\end{lemma}
		\begin{proof} 
			Observe that 
			\begin{align*}
				\frac{b_k(x-m)}{b_k(x)} = \frac{(x-m)(2k-1+m-x)}{x(2k-1-x)} & = \frac{x(2k-1-x) - (2k-1-2x)m - m^2}{x(2k-1-x)} \\
				& \leq 1 - \frac{m^2}{x(2k-1-x)} \leq 1 - \frac{m^2}{k^2} \leq e^{-\frac{m^2}{k^2}}.
			\end{align*}
			\vspace{-0.6cm}
		\end{proof}  
		\begin{corollary}
			For all $x \in (k-2,k-1)$ and $m \in \{1,\ldots,k-2\}$, 	        $\left|\frac{g_k(x-m)}{g_k(x)}\right| \leq 1$.
		\end{corollary}
		\begin{proof} 
			By Lemmas \ref{aklemma} and \ref{bklemma},
			$$
			\left|\frac{g_k(x-m)}{g_k(x)}\right| = \left|\frac{a_k(x-m)}{a_k(x)}\right| \left|\frac{b_k(x-m)}{b_k(x)}\right|^{16k}
			\leq e^{\frac{16m^2}{k}} \left(e^{-\frac{m^2}{k^2}}\right)^{16k}~=~1.
			$$
		\end{proof} 
		\end{proof}
\end{proof}
\subsubsection{Constructing the polynomial $s_2(x)$}
We now construct the polynomial $s_2(x)$.
\begin{lemma}\label{lem:h_sufficient_cond}
	For all $n\in \N$ and all $k \in [n]$, there exists a polynomial $s_2(x)$ of degree $O(\sqrt{kn}log(n))$ satisfying the following properties:
	\begin{enumerate}
		\item $s_2(x)$ is a sum of squares.
		\item For all $x \in [k-1,k]$, $s_2(x) \geq 1$.
		\item For all $x \in [0,k-1] \cup [k,2k-1]$, $s_2(x) \leq 1$.
		\item For all $x \in [2k-1,n]$, $s_2(x) \leq n^{-40k}$.
	\end{enumerate}
\end{lemma}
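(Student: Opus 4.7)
The plan is to construct $s_2(x)$ as a product of a Chebyshev polynomial squared and a carefully chosen damping factor, exploiting the quantitative bounds in Lemma~\ref{lem:Chebyshev_1-c/n_properties}. Specifically, I would let $r(x) = \frac{2x - (n + 2k - 1)}{n - 2k + 1}$ be the affine map sending $[2k-1, n]$ bijectively onto $[-1, 1]$ and set $d = C\sqrt{nk}\log n$ for a sufficiently large absolute constant $C$. The Chebyshev polynomial $T_d(r(x))$ is bounded by $1$ in absolute value on $[2k-1, n]$, while $|r(k)| - 1 = \Theta(k/n)$, so Lemma~\ref{lem:Chebyshev_1-c/n_properties} yields $T_d^2(r(k)) \geq n^{40k}$ once $C$ is large enough. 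The polynomial $s_2$ will then have the shape $s_2(x) = \alpha \cdot T_d^2(r(x)) \cdot D(x)^2$ where $\alpha$ is a normalization of order $n^{-40k}$ and $D$ is a polynomial of degree $O(\sqrt{nk}\log n)$; such a product is automatically SoS.

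The rationale for the extra damping factor $D$ is that the bare object $\alpha T_d^2(r(x))$ already satisfies conditions 2, 4, and the portion of condition 3 on $[k, 2k-1]$: on $[2k-1,n]$ it is bounded by $\alpha \leq n^{-40k}$; on $[k-1,k]$ it is at least $\alpha\, T_d^2(r(k)) \geq 1$ by the Lemma~\ref{lem:Chebyshev_1-c/n_properties} lower bound together with monotonicity of $T_d^2$ on $(-\infty,-1]$; and on $[k,2k-1]$ it is bounded by $1$ by that same monotonicity. The only condition it fails is condition 3 on $[0, k-1]$, because $|r(x)|$ keeps growing as $x$ decreases from $k-1$ toward $0$, driving $T_d^2(r(x))$ well past $1/\alpha$. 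Consequently $D(x)^2$ is tailored to be small near $x = 0$, of order $1$ on $[k-1, 2k-1]$, and bounded by $1$ on $[2k-1, n]$, so that it counters the excess Chebyshev growth on $[0,k-1]$ without spoiling the verified conditions. A natural candidate is a second, lower-degree Chebyshev-type polynomial composed with an affine map, possibly combined with a low power of $x$ to enforce strong vanishing at the origin.

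Verification of conditions 1, 2, 4 and the $[k, 2k-1]$ piece of 3 will follow essentially by inserting the factor $D^2 \leq 1$ into the estimates above and invoking Lemma~\ref{lem:Chebyshev_1-c/n_properties}. The main obstacle is showing condition 3 on $[0, k-1]$: this requires proving that the interior maximum of $\alpha\, T_d^2(r(x))\, D(x)^2$ on $[0, k-1]$ is at most $1$. The plan here is to take the logarithmic derivative, locate the interior critical point where the Chebyshev growth rate (which scales like $d/\sqrt{|r(x)|^2-1}$ by Lemma~\ref{lem:Chebyshev_1-c/n_properties}) balances the decay rate of $D^2$, and then bound the product at that critical location using the explicit estimates of the lemma. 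Carefully choosing the parameters of $D$ so that this interior bound is satisfied while keeping $D^2 \geq 1$ on $[k-1,k]$ will fix the balance of degrees between the Chebyshev factor and the damping factor at $O(\sqrt{nk}\log n)$ each, yielding the claimed total degree.
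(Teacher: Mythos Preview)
Your plan has a genuine gap: the damping factor $D$ is never actually constructed, and the task you hand to it is the original problem in disguise. With $d=\Theta(\sqrt{nk}\log n)$, the bare factor $\alpha\,T_d^2(r(x))$ climbs to $n^{\Theta(k)}$ at $x=0$, because $|r(0)|-1\approx 4k/n$ while $|r(k)|-1\approx 2k/n$, and Lemma~\ref{lem:Chebyshev_1-c/n_properties} turns that constant-factor gap in $c$ into an $n^{\Theta(k)}$ multiplicative gap once $d$ is of order $\sqrt{nk}\log n$. Hence $D(x)^2$ must drop to $n^{-\Theta(k)}$ on $[0,k-1]$, rise back to essentially $1$ at $x=k$ (otherwise condition~2 fails at the right endpoint, where your normalization forces $\alpha T_d^2(r(k))=1$), and stay $\le 1$ on all of $[k,n]$. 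A polynomial square with exactly that profile---tiny on $[0,k-1]$, of size $1$ on $[k-1,k]$, bounded by $1$ on $[k,n]$---is precisely an $s_2$-type object; the reduction buys nothing. Neither of your suggested candidates works: a power of $x$ explodes on $[k,n]$, and a Chebyshev factor centered on any subinterval grows on the complementary side.

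The paper avoids this trap by reversing the roles of the two ingredients. It takes the Chebyshev degree to be only $\sqrt{n/k}$, so that $H_k(x):=T^2_{\sqrt{n/k}}\bigl(2x/n-1-2(2k-1)/n\bigr)$ is bounded by an \emph{absolute constant} $C$ on all of $[0,2k-1]$, while still separating $H_k([k-1,k])\subseteq[1.5,C]$ from $H_k([2k-1,n])\subseteq[0,1]$ and being monotone on $[0,2k-1]$. The shaping is then done by composing with a fixed constant-degree bump polynomial $p_{a,b,C}$ (mapping $[a,b]$ above $1$, $[0,1]$ below $\tfrac12$, and $[0,a]\cup[b,C]$ into $[0,1]$) and raising the composition to the power $40\lceil k\log n\rceil$. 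The total degree is $\sqrt{n/k}\cdot O(1)\cdot O(k\log n)=O(\sqrt{nk}\log n)$. The idea you are missing is to push the $k\log n$ into the outer exponent rather than into the Chebyshev degree, so that the Chebyshev factor never exceeds a constant and no separate damping polynomial is needed.
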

\begin{proof}
	\begin{lemma}
		\label{lem:H_polynpomial_properties}
		For $C:=  e^{8\sqrt{3}}$ and $k \in \{0, \ldots, \lceil n/2 \rceil \}$, $H_k(x) :=  T^2_{\sqrt{\frac{n}{k}}}\left( 2\frac{x}{n} -1 -2\frac{2k-1}{n}   \right)$ satisfies the following properties:
		\begin{enumerate}
			\item For all $x \in [2k-1,n]$, $H_k(x) \leq 1$.
			\item For all $k \in [0,2k-1]$, $H'_k(x) < 0$. \label{prop: 2}
			\item $H_k(0) \leq C$. \label{prop: 3}
			\item $H_k(k) \geq 1.5$.
		\end{enumerate}
	\end{lemma}
	\begin{proof}
		Note that
		$H_k(2k-1)=T^2_{\sqrt{\frac{n}{k}}}\left(-1 \right) =1$ and $H_k(n)=T^2_{\sqrt{\frac{n}{k}}}\left(1 - 2\frac{2k-1}{n}  \right) \leq 1$, which implies the first property.
		We prove Properties~\eqref{prop: 2} and~\eqref{prop: 3}. By Lemma~\ref{lem:Chebyshev_1-c/n_properties}, for $k$ such that $4k-2 \leq n$, we have
		$$
		H_k(0)=T^2_{\sqrt{\frac{n}{k}}}\left(-1 -\frac{4k-2}{n}  \right) \leq 
		  \left(1 +  \sqrt{\frac{32k-16}{n}}  \right)^{2\sqrt{\frac{n}{k}}}
		  \leq  e^{2\sqrt{\frac{32k-16}{k}}} \leq  e^{8\sqrt{3}}$$
		and for $k$ such that $4k-2 \geq n$, by Lemma~\ref{lem:Chebyshev_1-c/n_properties}, for $c \geq n$, we have
		$$
		H_k(0)=T^2_{\sqrt{\frac{n}{k}}}\left(-1 -\frac{4k-2}{n}  \right) \leq 
		  \left(1 +  \frac{12k}{n}  \right)^{2\sqrt{\frac{n}{k}}}
		  \leq \left(1 + \sqrt{\frac{12k}{n}}  \right)^{4\sqrt{\frac{n}{k}}}
		\leq  e^{4\sqrt{\frac{12k}{k}}} \leq  e^{8\sqrt{3}}.$$
		Moreover, by Lemma~\ref{lem:Chebyshev_1-c/n_properties} we have 
		\[
		H_k(k)=T^2_{\sqrt{\frac{n}{k}}}\left(  -1 -\frac{2k-2}{n}  \right) \geq \frac{1}{4} \left(1+\sqrt{\frac{4k-4}{n}}  \right)^{2\sqrt{\frac{n}{k}}} \geq \frac{1}{4} \left(1+\sqrt{\frac{2k}{n}}  \right)^{2\sqrt{\frac{n}{k}}},
		\]
		where the last inequality holds because $k \geq 2$.  Finally, since $n \geq 2k$, 
		\[
		\frac{1}{4} \left(1+\sqrt{\frac{2k}{n}} \right)^{2\sqrt{\frac{n}{k}}} \geq \frac{1}{4}2^{2  \sqrt{\frac{n}{k}}  \sqrt{\frac{2k}{n}}} = \frac{1}{4}2^{2\sqrt{2}} \geq 1.5.
		\]

	\end{proof}
	
	\begin{lemma}
	    For any constants $a,b,C$ such that $1.5 \leq a < b < C$, there is a sum of squares polynomial $p_{a,b,C}(x)$ of degree at most $8\lceil{C^2}\rceil$ such that the following hold:
	    \begin{enumerate}
	        \item For all $x \in [a,b]$, $p_{a,b,C}(x) \geq 1$.
	        \item For all $x \in [0,1]$, $|p_{a,b,C}(x)| \leq \frac{1}{2}$.
	        \item For all $x \in [0,a] \cup [b,C]$, $|p_{a,b,C}(x)| \leq 1$.
	    \end{enumerate}
	\end{lemma}
	\begin{proof}
	    We can take the polynomial 
	    \[
	    p_{a,b,C}(x) = \left(1 - \frac{(x - a)(x-b)}{C^2}\right)^{4\lceil{C^2}\rceil}.
	    \]
	    We now make the following observations:
	    \begin{enumerate}
	        \item For all $x \in [a,b]$, $1 - \frac{(x - a)(x-b)}{C^2} \geq 1$ so  $p_{a,b,C}(x) \geq 1$.
	        \item For all $x \in [0,1]$, $|1 - \frac{(x - a)(x-b)}{C^2}| \leq 1 - \frac{1}{4C^2}$ so $|p_{a,b,C}(x)| \leq \left(1 - \frac{1}{4C^2}\right)^{4\lceil{C^2}\rceil} \leq \frac{1}{2}$.
	        \item For all $x \in [0,a] \cup [b,C]$, $|1 - \frac{(x - a)(x-b)}{C^2}| \leq 1$ so $|p_{a,b,C}(x)| \leq 1$.
	    \end{enumerate}
	\end{proof}
	We construct the polynomial $s_2(x)$. For $k\in \{2,\ldots, \lceil n/2 \rceil\}$, let
	$$
	s_2(x):= p_{a,b,C}\left(H_k(x)\right)^{40\lceil{k \log(n)}\rceil},
	$$
	where $a = H_k(k)$, $b = H_k(k-1)$, and $C = e^{8\sqrt{3}}$ is the constant given by Lemma \ref{lem:H_polynpomial_properties}.
	\begin{lemma}
		\label{lem:h_k_satisties_all_propoerties}
		For any $k\in \{2,\ldots, \lceil n/2 \rceil\}$, $s_2(x)$ satisfies the properties in Lemma~\ref{lem:h_sufficient_cond}. 
	\end{lemma}
	\begin{proof}
	    We make the following observations:
	    \begin{enumerate}
	        \item For all $x \in [0,k-1] \cup [k,2k-1]$, $H_k(x) \in [0,H_k(k)] \cup [H_k(k-1),C]$ so $|p_{a,b,C}(H_k(x))| \leq 1$ and thus
	        $$s_2(x) = p_{a,b,C}\left(H_k(x)\right)^{40\lceil{k \log(n)}\rceil} \leq 1.$$
	        \item For all $x \in [k-1,k]$, $H_k(x) \in [H_k(k),H_k(k-1)]$ so $p_{a,b,C}(H_k(x)) \geq 1$ and thus
	        $$s_2(x) = p_{a,b,C}\left(H_k(x)\right)^{40\lceil{k \log(n)}\rceil} \geq 1.$$
	        \item For all $x \in [2k-1,n]$, $H_k(x) \in [0,1]$ so $|p_{a,b,C}(H_k(x))| \leq 1$ and thus,
	        $$s_2(x) = p_{a,b,C}\left(H_k(x)\right)^{40\lceil{k \log(n)}\rceil} \leq n^{-40k}.$$
	    \end{enumerate}
	\end{proof}

\end{proof}

\section{SoS rank upper bound for the MK problem via SQF certification}\label{section:MK_problem}
In this section, we prove an upper bound of $O(\sqrt{n}\log(P))$ on the SoS rank for the MK problem, which, together with the lower bound presented in~\cite{Kurpisz19}, constitutes proof of Theorem~\ref{thm:MK_SoS_rank}.
We first discuss the necessary properties a candidate SoS certificate for the MK problem has to satisfy.
A degree $d$ SoS certificate for the MK problem is of the form
$\sum_{i\in [n]} x_i-1=s_0(\mathbf{x})+s_1(\mathbf{x}) \left(\sum_{i \in [n]}  x_i - \frac{1}{P} \right), $
where $s_0,s_1$ are SoS polynomials of degree $2d+2$ and $2d$, respectively.
Through permutation of indices, the existence of an SoS certificate for the MK problem implies the existence of an SoS certificate such that $s_1$ is symmetric, that is, there exists $\tilde{s}_1:\R\to\R$ such that ${s_1(\mathbf{x})=\tilde{s}_1(|\mathbf{x}|)}$ for all $\mathbf{x}\in\{0,1\}^n$. Since $s_0$ is globally nonnegative, $\tilde{s}_1$ needs to satisfy 
\begin{equation}\label{eq:necessaryMK}
	|\mathbf{x}|-1  \geq \tilde{s}_1(|\mathbf{x}|)\left(|\mathbf{x}|- \frac{1}{P}\right)  \qquad \text{ for all } \mathbf{x}\in \{0,1\}^n.
\end{equation}
Thus, $\tilde{s}_1(0) \geq P$, $\tilde{s}_1(1)=0$, and $\tilde{s}_1(x)\leq \frac{x-1}{x - \frac{1}{P}}$ for $x\in\{2,\ldots,n\}$.

We will construct a sum of squares polynomial $\tilde{s}_1$ which satisfies the following slightly stronger conditions:
\begin{enumerate}
    \item $\tilde{s}_1(0) > P$.
    \item For all $x \in [1,2]$, $\tilde{s}_1(x) \leq \frac{x-1}{2}$.
    \item For all $x \in [2,n]$, $\tilde{s}_1(x) \leq \frac{1}{2}$.
\end{enumerate}
We will then observe that these conditions imply that 
\[
\tilde{s}_0(|x|) = |x| - 1 - \tilde{s}_1(|x|)\left(|x| - \frac{1}{P}\right)
\]
is positive for all $x \in \{0\} \cup (1,n]$ which is sufficient to show that $\tilde{s}_0(x)$ is a sum of squares modulo the Boolean constraints.

A polynomial $T_{2 \sqrt{n}}(\frac{x-1+r_0}{n}-1)$, where $r_0$ is the smallest root of the polynomial $T_{2 \sqrt{n}}(\frac{x}{n}-1)$, which for $P=2$ satisfies similar 
requirements was constructed in~\cite[Lemma 15]{Kurpisz19} using properties of Chebyshev polynomials.

To obtain our polynomial $\tilde{s}_1(x)$, we generalize this construction using three parameters, the degree $d$ of the Chebyshev polynomial, a scaling factor $\alpha$, and an even power $m$.


\begin{definition}
Given an $\alpha > 0$, a natural number $d$, and an even natural number $m$, define 
$
\tilde{s}_{\alpha,d,m}(x) := {\alpha}T_d\left(\frac{x-1+r_0}{n} - 1\right)^m,
$
where $r_0$ is the smallest root of the polynomial $T_d\left(\frac{x}{n} - 1\right)$.
\end{definition}
\begin{lemma}
    $r_0 \leq \frac{{\pi}^2{n}}{4d^2}$.
\end{lemma}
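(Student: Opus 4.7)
The plan is to use the explicit formula for the roots of the Chebyshev polynomial $T_d$ and then estimate the smallest one. Recall that the roots of $T_d$ are $\cos\bigl(\tfrac{(2j-1)\pi}{2d}\bigr)$ for $j=1,\ldots,d$. Substituting $y=x/n-1$, the roots of $T_d\bigl(\tfrac{x}{n}-1\bigr)$ are therefore
\[
x_j \;=\; n\left(1+\cos\!\left(\tfrac{(2j-1)\pi}{2d}\right)\right), \qquad j=1,\ldots,d.
\]
Since $\cos$ is monotonically decreasing on $[0,\pi]$ and the angles $\tfrac{(2j-1)\pi}{2d}$ all lie in $(0,\pi)$, the smallest root is attained at $j=d$, giving
\[
r_0 \;=\; n\left(1+\cos\!\left(\pi-\tfrac{\pi}{2d}\right)\right) \;=\; n\left(1-\cos\!\left(\tfrac{\pi}{2d}\right)\right).
\]

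Next, I would apply the elementary trigonometric identity $1-\cos(\theta)=2\sin^2(\theta/2)$ together with the standard bound $\sin(\theta/2)\leq \theta/2$ to conclude
\[
r_0 \;=\; 2n\sin^2\!\left(\tfrac{\pi}{4d}\right) \;\leq\; 2n\left(\tfrac{\pi}{4d}\right)^2 \;=\; \tfrac{\pi^2 n}{8d^2} \;\leq\; \tfrac{\pi^2 n}{4d^2},
\]
which is the desired bound. (In fact the argument gives the slightly stronger constant $\pi^2/8$, but $\pi^2/4$ suffices for the purposes of the subsequent estimates in the MK section.)

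There is no real obstacle here: the only things used are the closed-form description of the zeros of $T_d$, monotonicity of $\cos$ on $[0,\pi]$ to identify the extremal root, and the second-order Taylor bound for cosine near zero. The only care required is to check that $\pi - \tfrac{\pi}{2d}$ is indeed the angle among $\tfrac{(2j-1)\pi}{2d}$ closest to $\pi$, which is immediate because $2j-1$ is maximized at $j=d$.
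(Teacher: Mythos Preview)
Your proof is correct and follows essentially the same approach as the paper: both identify the smallest zero of $T_d$ as $\cos\bigl(\pi-\tfrac{\pi}{2d}\bigr)$ and then bound $1-\cos\bigl(\tfrac{\pi}{2d}\bigr)$ by a quadratic Taylor estimate. Your use of the half-angle identity even yields the sharper constant $\tfrac{\pi^2}{8}$, but otherwise the arguments are the same.
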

\begin{proof}
    Observe that $T_d(x) = \cos(d\cos^{-1}(x))$ so the first zero of $T_d(x)$ is $\cos\left(-\pi + \frac{\pi}{2d}\right) \leq -1 + \frac{\pi^2}{4d^2}$. Thus, the first zero of $T_d\left(\frac{x}{n} - 1\right)$ is at most $\frac{{\pi}^2{n}}{4d^2}$.
\end{proof}
\begin{lemma}\label{madpropertieslemma}
For $d > \frac{\pi}{2}\sqrt{n}$ the polynomial $\tilde{s}_{\alpha,d,m}(x)$ satisfies the following properties:
\begin{enumerate}
    \item For all $x \in [1,n]$, $\tilde{s}_{\alpha,d,m}(x) \leq \min{\{\frac{{\alpha}d^2}{n}(x-1),\alpha\}}$.
    \item $\tilde{s}_{\alpha,d,m}(0) \geq \alpha\left(\frac{1}{4}\left(1 + \sqrt{\frac{2(1 - r_0)}{n}}\right)^{d}\right)^{m}$.
\end{enumerate}
\end{lemma}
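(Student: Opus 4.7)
The plan is to prove each property by evaluating the polynomial $T_d$ at the argument $y := \tfrac{x-1+r_0}{n} - 1$ and applying standard Chebyshev estimates in the appropriate regime.

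For property 1, I would first note that for $x \in [1,n]$ the argument $y$ ranges over $\left[\tfrac{r_0}{n}-1,\tfrac{r_0-1}{n}\right] \subseteq [-1,1]$, which holds since $r_0 < 1$ (by the previous lemma $r_0 \le \tfrac{\pi^2 n}{4d^2}$ combined with the hypothesis $d > \tfrac{\pi}{2}\sqrt{n}$). On $[-1,1]$, $|T_d(y)| \le 1$. Moreover, since $T_d$ vanishes at $y_0 := \tfrac{r_0}{n} - 1$ by the very definition of $r_0$, and since $\sup_{[-1,1]} |T_d'| = d^2$ by Markov's inequality, the mean value theorem gives $|T_d(y)| \le d^2 |y - y_0| = d^2(x-1)/n$. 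Since $m$ is even and $|T_d(y)| \le 1$, we have $T_d(y)^m = |T_d(y)|^m \le |T_d(y)| \le \min\{1, d^2(x-1)/n\}$, and multiplying by $\alpha$ yields property 1.

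For property 2, at $x = 0$ the argument becomes $y = -1 - \tfrac{1 - r_0}{n}$, so I would apply the first part of Lemma~\ref{lem:Chebyshev_1-c/n_properties} with $c = 1 - r_0 \in (0, 1) \subseteq [0, n]$ to obtain
\[
T_d(y)^2 \;\ge\; \tfrac{1}{4}\left(1 + \sqrt{\tfrac{2(1-r_0)}{n}}\right)^{2d}.
\]
Raising both sides to the power $m/2$ (legitimate since $m$ is an even positive integer and both sides are nonnegative) gives
\[
T_d(y)^m \;\ge\; \tfrac{1}{2^m}\left(1 + \sqrt{\tfrac{2(1-r_0)}{n}}\right)^{dm} \;\ge\; \tfrac{1}{4^m}\left(1 + \sqrt{\tfrac{2(1-r_0)}{n}}\right)^{dm},
\]
and multiplying by $\alpha$ yields the claimed bound.

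Overall the proof is mostly mechanical, piggybacking on the previous lemma on Chebyshev polynomials. The key things to verify are that the hypothesis $d > \tfrac{\pi}{2}\sqrt{n}$ is exactly what is needed to guarantee $r_0 < 1$ (so that the affine image of $[1,n]$ lies inside $[-1,1]$ and the linearization of $T_d$ at its root $y_0$ is valid), and that Lemma~\ref{lem:Chebyshev_1-c/n_properties} is applicable at $x = 0$ with $c = 1 - r_0$. There is substantial slack in property 2 (we derive a constant $1/2^m$ where only $1/4^m$ is claimed), which turns out to be convenient for the downstream construction of $\tilde{s}_1$ for the MK problem.
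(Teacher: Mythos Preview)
Your proof is correct and follows essentially the same route as the paper: for property~1 you use Markov's inequality together with the vanishing of $T_d$ at $y_0 = r_0/n - 1$ (the paper phrases this as the Markov Brothers' Theorem applied to the rescaled polynomial), and for property~2 you invoke Lemma~\ref{lem:Chebyshev_1-c/n_properties} with $c = 1 - r_0$, exactly as the paper does. Your write-up is in fact a bit more explicit than the paper's in two places: you spell out why $|T_d(y)|^m \le |T_d(y)|$ (using $|T_d(y)|\le 1$ and $m\ge 1$), and you track the constant in property~2 to get $1/2^m$ before relaxing to the stated $1/4^m$.
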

\begin{proof}
    For the first statement, observe that by the Markov Brothers' Theorem, since $|T_d(x)| \leq 1$ for all $x \in [-1,1]$, $|T'_d(x)| \leq d^2$ for all $x \in [-1,1]$. This implies that $\left|T'_d\left(\frac{x-1+r_0}{n} - 1\right)\right| \leq \frac{d^2}{n}$ for all $x \in [1 - r_0,2n + 1 - r_0]$. Since $T_d\left(\frac{x-1+r_0}{n} - 1\right) = 0$, when $x = 1$, $\left|T_d\left(\frac{x-1+r_0}{n} - 1\right)\right| \leq \min{\{\frac{d^2(x-1)}{n},1\}}$ for all $x \in [1,n]$, which implies the result.
    
    For the second statement, by Lemma \ref{lem:Chebyshev_1-c/n_properties}, if $0 \leq c \leq n$ then $|T_d(-1 - \frac{c}{n})| \geq \frac{1}{4}\left(1 + \sqrt{\frac{2c}{n}}\right)^{d}$. Applying this lemma with $c = 1 - r_0$, the result follows.
\end{proof}
\begin{corollary}\label{madconditionscorollary}
If the conditions
\begin{enumerate}
    \item $d \geq 3\sqrt{n}$,
    \item $\alpha \leq \frac{n}{2d^2} \leq \frac{1}{2}$,
    \item $m > \frac{\ln(P) - ln(\alpha)}{d\ln\left(1 + \sqrt{\frac{2(1 - r_0)}{n}}\right) - \ln(4)}$,
\end{enumerate}
are satisfied,
then the following properties hold:
\begin{enumerate}
    \item $\tilde{s}_{\alpha,d,m}(0) > P$.
    \item For all $x \in [1,2]$, $\tilde{s}_{\alpha,d,m}(x) \leq \frac{x-1}{2}$.
    \item For all $x \in [2,n]$, $\tilde{s}_{\alpha,d,m}(x) \leq \frac{1}{2}$.
\end{enumerate}
Thus,  $(x - 1) - \tilde{s}_{\alpha,d,m}(x)(x - \frac{1}{P}) > 0$
whenever $x \in \{0\} \cup (1,n]$.
\end{corollary}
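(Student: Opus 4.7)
The plan is to read the corollary as a direct specialization of Lemma~\ref{madpropertieslemma} to parameters that force the three numerical properties, followed by an elementary case analysis that combines these properties to obtain the strict positivity claim. I would first verify that the hypothesis $d \geq 3\sqrt{n}$ is strictly stronger than the hypothesis $d > \tfrac{\pi}{2}\sqrt{n}$ of Lemma~\ref{madpropertieslemma}, so both conclusions of that lemma are available.

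Next I would derive the three intermediate properties. Property 2 is immediate from the bound $\tilde{s}_{\alpha,d,m}(x) \leq \tfrac{\alpha d^2}{n}(x-1)$ on $[1,n]$: substituting $\alpha \leq n/(2d^2)$ yields $\tilde{s}_{\alpha,d,m}(x) \leq (x-1)/2$ on $[1,2]$. Property 3 uses the other half of the same bound, $\tilde{s}_{\alpha,d,m}(x) \leq \alpha$, together with $\alpha \leq n/(2d^2) \leq 1/18 < 1/2$ (using $d \geq 3\sqrt{n}$). Property 1 uses the second conclusion of Lemma~\ref{madpropertieslemma}, namely $\tilde{s}_{\alpha,d,m}(0) \geq \alpha\bigl(\tfrac{1}{4}(1 + \sqrt{2(1-r_0)/n})^d\bigr)^m$; taking logarithms and isolating $m$ shows that the third hypothesis of the corollary is exactly the threshold making this lower bound strictly exceed $P$.

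For the strict positivity statement I would split into three cases. At $x=0$, the expression $(x-1) - \tilde{s}(x)(x - 1/P)$ reduces to $-1 + \tilde{s}(0)/P$, which is positive by Property 1. For $x \in (1,2]$, Property 2 gives $\tilde{s}(x)(x - 1/P) \leq \tfrac{x-1}{2}(x - 1/P) = (x-1)(1 - \tfrac{1}{2P}) < x - 1$, using $P \geq 2$ and $x > 1$. For $x \in [2,n]$, Property 3 gives $\tilde{s}(x)(x - 1/P) \leq \tfrac{1}{2}(x - 1/P) = \tfrac{x}{2} - \tfrac{1}{2P}$, and the inequality $\tfrac{x}{2} - \tfrac{1}{2P} < x - 1$ rearranges to $x > 2 - \tfrac{1}{P}$, which holds since $x \geq 2$.

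The proof is essentially bookkeeping once Lemma~\ref{madpropertieslemma} is available; the only delicate step is matching the exact logarithmic threshold on $m$ in the hypotheses with the need to have \emph{strict} inequality $\tilde{s}(0) > P$ so that the conclusion is strict at $x=0$. All the other arithmetic is chosen so that the tight cases $x = 2$ and $x$ slightly above $1$ still satisfy strict inequalities because $1/P > 0$.
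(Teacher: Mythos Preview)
Your approach is essentially the same as the paper's, and the case analysis for the final positivity claim matches theirs. Two small points deserve tightening: first, when you ``isolate $m$'' to get Property~1 you are implicitly dividing by $d\ln\bigl(1+\sqrt{2(1-r_0)/n}\bigr)-\ln 4$, and the paper explicitly checks that this quantity is positive (using $r_0 \le \tfrac{\pi^2 n}{4d^2}\le \tfrac12$ and $d\ge 3\sqrt{n}$ to get $(1+\sqrt{2(1-r_0)/n})^d \ge (1+1/\sqrt{n})^d \ge 8 > 4$), which you should state; second, in your $x\in(1,2]$ case the step $\tfrac{x-1}{2}(x-\tfrac1P)=(x-1)(1-\tfrac{1}{2P})$ is an equality only at $x=2$ and should be the inequality $\le$, obtained by bounding $x\le 2$.
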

\begin{proof}
The first statement follows from algebraic manipulations provided that 
$$\frac{1}{4}\left(1 + \sqrt{\frac{2(1 - r_0)}{n}}\right)^{d} \geq 1.$$ To confirm that this holds, observe that $r_0 \leq \frac{{\pi}^2{n}}{4d^2} \leq \frac{1}{2}$. Thus,  
\[
\left(1 + \sqrt{\frac{2(1 - r_0)}{n}}\right)^{d} \geq \left(1 + \frac{1}{\sqrt{n}}\right)^{d} \geq 2^{\frac{d}{\sqrt{n}}} \geq 8.
\]

For the second and third statements, we use the facts that for all $x \in [1,n]$, $\tilde{s}_{\alpha,d,m}(x) \leq \frac{{\alpha}d^2}{n}(x-1)$ and $\tilde{s}_{\alpha,d,m}(x) \leq \alpha$, respectively.

To show that $(x - 1) - \tilde{s}_{\alpha,d,m}(x)(x - \frac{1}{P}) > 0$ whenever $x \in \{0\} \cup (1,n]$, we make the following observations:
\begin{enumerate}
    \item For $x = 0$, $-1 - \tilde{s}_{\alpha,d,m}(0)(-\frac{1}{P}) > -1 - P\left(-\frac{1}{P}\right) = 0$.
    \item For $x \in (1,2]$, $(x - 1) - \tilde{s}_{\alpha,d,m}(x)(x - \frac{1}{P}) \leq (x-1) - \frac{x-1}{2}\left(x - \frac{1}{P}\right) > 0$.
    \item For $x \in [2,n]$, $(x - 1) - \tilde{s}_{\alpha,d,m}(x)(x - \frac{1}{P}) \leq (x-1) - \frac{1}{2}\left(x - \frac{1}{P}\right) > 0$.
\end{enumerate}
\end{proof}
We now confirm that 
\[
\tilde{s}_0 = (x - 1) - \tilde{s}_{\alpha,d,m}(x)\left(x - \frac{1}{P}\right)
\]
is a sum of squares modulo the Boolean axioms. To see this, observe that since $\tilde{s}_0(x) > 0$ for $x \in \{0\} \cup (1,n]$, $\tilde{s}_0(x)$ must have an even number of roots in $(0,1]$ and no other roots in $[0,n]$. Thus, we can write
\[
\tilde{s}_0(x) = p\prod_{i=1}^{l}(x-a_i)(x-b_i)
\]
for some polynomial $p$ which is positive on $[0,n]$ and some real roots $a_1,\ldots,a_l,b_1,\ldots,b_l \in (0,1]$. Since $p$ is positive on $[0,n]$, $p$ is a sum of squares modulo the Boolean axioms. By Corollary  \ref{cor:SQF_SoS_degree_closer_roots_crude_bound}, since $|x|(|x|-1)$ is a sum of squares modulo the Boolean axioms, for each $i \in [l]$, $(x-a_i)(x-b_i)$ is also a sum of squares modulo the Boolean axioms. Thus, $\tilde{s}_0(x)$ is a sum of squares modulo the Boolean axioms.

Finally, we observe that we can satisfy the required conditions on $d$, $\alpha$, and $m$ by taking $d = \lceil{3\sqrt{n}}\rceil$, $\alpha = \frac{1}{2d^2} \approx \frac{1}{18 n}$, and $m = O(\log(P))$, which gives a sum of squares certificate of degree $O(\sqrt{n}\log(P))$.

\section{SoS rank upper bound for the SC problem via SQF certification}\label{sec:SC_Problem}
In this section, we refute the Bienstock---Zuckenberg conjecture for the SC problem. We provide a degree $O(\sqrt{n} \log(n))$ SoS certificate for the SC problem on the Boolean hypercube, thus proving Theorem~\ref{thm:SC_SoS_rank}. For this proof, we use the SoS rank for certifying SQFs for $k=2$ in Theorem~\ref{thm:SQF_SoS_degree}. We present an alternative direct proof in Appendix~\ref{sec:SC_second_proof}.  
We begin this section with a discussion on the properties necessary for an SoS polynomial $s$ to even be considered as a possible candidate for an SoS certificate for the SC problem.
An SoS certificate for the SC problem is of the form
$$\sum_{i\in [n]} x_i-2=s_0(\mathbf{x})+\sum_{i \in [n]}s_i(\mathbf{x})g_i(\mathbf{x}),$$
where $$g_i(\mathbf{x})=\left(\sum_{j \in [n] \setminus\{i\}} x_j - 1 \right).$$
As opposed to the discussion in Section~\ref{section:MK_problem}, an SoS certificate for the SC problem not only has multiple constraints but also displays a certain type of asymmetry, which is present in the formulation of the polynomials $g_i$ for $i\in [n]$. One could hope to abuse this asymmetry by constructing different SoS polynomials $s_i\in \Sigma_{n,d}$ for certain $d\in [n]$, but for this proof, we proceed in a similar fashion as for the MK problem and instead construct only one symmetric SoS polynomial $s:\{0,1\}^n\to\R$ and look for the certificate of the form
$$
\sum_{i\in [n]} x_i-2=s_0(\mathbf{x})+\sum_{i \in [n]}s(\mathbf{x})g_i(\mathbf{x}).
$$
Through permutation of indices, the existence of an SoS certificate for the SC problem implies the existence of an SoS certificate such that $s$ is symmetric, that is, there exists an $\tilde{s}:\R\to\R$ such that ${s(\mathbf{x})=\tilde{s}(|\mathbf{x}|)}$ for all $\mathbf{x}\in\{0,1\}^n$. As for the MK problem, we are interested in the requirements that polynomial $\tilde{s}$ needs to satisfy such that $s$ constitutes part of an SoS certificate for the SC problem. Let $g(\mathbf{x}):= \sum_{i \in [n]}g_i(\mathbf{x})=(n-1)(\sum_{i=1}^n x_i)-n$ and note that $g$ is a symmetric polynomial; there exists a univariate polynomial $\tilde{g}$ such that $\tilde{g}(|\textbf{x}|)= g(\textbf{x})$ for all $\mathbf{x}\in\{0,1\}^n$. Since $s_0$ is globally nonnegative, this implies that $s$ needs to satisfy
\begin{align}\label{ineq:SCCertificate}
	|\mathbf{x}|-2&\geq \tilde{s}(|\mathbf{x}|)\left(|\mathbf{x}|(|\mathbf{x}|-2)+(n-|\mathbf{x}|)(|\mathbf{x}|-1)\right)\nonumber \\
	&=\tilde{s}(|\mathbf{x}|)((n-1)|\mathbf{x}|-n)= \tilde{s}(|\mathbf{x}|)\tilde{g}(|\mathbf{x}|) \qquad \text{for all } \textbf{x}\in \{0, 1\}^n.
\end{align}
This implies that $\tilde{s}(0) \geq \frac{2}{n}$, $\tilde{s}(1) \geq 1$, $\tilde{s}(2) = 0$ and $\tilde{s}(x) \leq \frac{x-2}{3(n-1)x - n}$ for all $x \in \{3,4,\ldots,n\}$. We will construct a sum of squares polynomial $\tilde{s}(x)$ which satisfies the following slightly stronger conditions:
\begin{enumerate}
    \item $\tilde{s}(x) \geq 1$ for all $x \in [0,1]$.
    \item For all $x \in [1,2)$, $\frac{\tilde{s}(x)}{x-2} < 0$ and $\frac{\tilde{s}(x)}{x-2}$ is increasing.
    \item $\tilde{s}(x) \leq \frac{(x-2)}{2n}$ for all $x \in [2,3]$.
    \item $\tilde{s}(x) \leq \frac{1}{2n}$ for all $x \in [3,n]$.
\end{enumerate}
We will then observe that these conditions imply that $\tilde{s}_0(x) = x - 2 - \tilde{s}(x)((n-1)x-n)$ is positive for $x \in [0,1) \cup (2,n]$ and has exactly two zeros in the interval $[1,2]$, one of which is $x = 2$. We can then use Theorem \ref{thm:SQF_SoS_degree} and Corollary \ref{cor:SQF_SoS_degree_closer_roots_crude_bound} to show that $\tilde{s}_0$ is a sum of squares of degree $\deg(\tilde{s}) + O(\sqrt{n}\log(n))$ modulo the Boolean axioms.
\begin{lemma}
    For $d = 3\sqrt{n}$, $\alpha = \frac{1}{18n}$, and $m = 2\lceil{\log_2(\sqrt{18n})}\rceil$ the polynomial $\tilde{s}(x) = \tilde{s}_{\alpha,d,m}(x-1)$ satisfies the following properties: 
    \begin{enumerate}
        \item $\tilde{s}(x) \geq 1$ for all $x \in [0,1]$.
        \item For all $x \in [1,2)$, $\frac{\tilde{s}(x)}{x-2} < 0$ and $\frac{\tilde{s}(x)}{x-2}$ is increasing.
        \item $\tilde{s}(x) \leq \frac{(x-2)}{2n}$ for all $x \in [2,3]$.
        \item $\tilde{s}(x) \leq \frac{1}{2n}$ for all $x \in [3,n]$.
    \end{enumerate}
\end{lemma}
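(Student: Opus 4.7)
The plan is to evaluate $\tilde s(x) = \alpha T_d(z(x))^m$ with $z(x) = (x-2+r_0)/n - 1$ on each of the four subintervals appearing in the statement, translating every property into an estimate on $T_d$ on a piece of the real line, using Lemma~\ref{lem:Chebyshev_1-c/n_properties} for growth below $-1$, the Markov brothers' inequality on $[-1,1]$, and the classical interlacing of the zeros of $T_d$, $T_d'$, $T_d''$ for the monotonicity in property~2. Throughout I take $d = \lceil 3\sqrt{n}\rceil$, $\alpha = 1/(18n)$, $m = 2\lceil\log_2\sqrt{18n}\rceil$, and use the bound $r_0 \leq \pi^2 n/(4d^2) = \pi^2/36 < 1/2$.

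For property~1 I would write $z = -1 - c/n$ with $c = 2 - x - r_0 \in [1-r_0,\,2-r_0] \subseteq [1/2,\,2]$, so that Lemma~\ref{lem:Chebyshev_1-c/n_properties}(1) gives $T_d(z)^2 \geq \frac{1}{4}(1+\sqrt{2c/n})^{2d} \geq \frac{1}{4}(1+1/\sqrt{n})^{6\sqrt{n}} \geq \frac{1}{4}\cdot 2^6 = 16$ via $(1+1/\sqrt{n})^{\sqrt{n}} \geq 2$; thus $T_d^m \geq 4^m$, and the choice of $m$ forces $\alpha T_d^m \geq (18n)^2/(18n) = 18n \geq 1$. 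Properties~3 and~4 both start from the observation that $z(x) \in (-1,1)$ for $x \in [2,n]$, so $|T_d(z)| \leq 1$ and $\tilde s(x) \leq \alpha = 1/(18n) \leq 1/(2n)$, settling property~4. For property~3 I would additionally invoke the Markov brothers' inequality together with $T_d(z_0) = 0$ to get $|T_d(z)| \leq d^2(x-2)/n = 9(x-2)$, and then split into $x-2 \leq 1/9$ (where $\alpha(9(x-2))^m \leq \alpha \cdot 9(x-2) = (x-2)/(2n)$ because $(9(x-2))^{m-1} \leq 1$) and $x-2 \geq 1/9$ (where $\tilde s(x) \leq \alpha \leq (x-2)/(2n)$).

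The main obstacle is the monotonicity claim in property~2. Factoring $T_d(z) = (z - z_0)Q(z)$ with $\deg Q = d-1$ and $Q(z_0) = T_d'(z_0) \neq 0$, I get $\tilde s(x)/(x-2) = (\alpha/n)(z-z_0)^{m-1}Q(z)^m$, which is strictly negative on $[1,2)$ because $T_d^m > 0$ on $[z_0 - 1/n, z_0)$ (as $z_0$ is the smallest real zero of $T_d$) and $m-1$ is odd. For the monotonicity in $z$, I would apply Taylor's theorem with integral remainder to obtain $T_d'(z)(z - z_0) - T_d(z) = \int_z^{z_0}(z_0 - t)\,T_d''(t)\,dt$, and reduce the claim to showing that $T_d''$ has a constant sign on $(-\infty, z_0]$ equal to $\mathrm{sign}(T_d)$. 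This is the delicate step: by the classical interlacing, the zeros of $T_d'$ are $\cos(k\pi/d)$ for $k = 1,\ldots,d-1$, whose smallest value $-\cos(\pi/d)$ exceeds $z_0 = -\cos(\pi/(2d))$, and the zeros of $T_d''$ interlace strictly between those of $T_d'$, so the smallest zero of $T_d''$ also lies above $z_0$; consequently $T_d''$ has a fixed sign on $(-\infty, z_0]$, and a direct computation at $z_0$ using the Chebyshev ODE $(1-z^2)T_d'' - zT_d' + d^2 T_d = 0$ matches that sign with $\mathrm{sign}(T_d)$. Together with the integral identity, these facts yield
\[
\frac{d}{dz}\left(\frac{T_d(z)^2}{(z-z_0)^2}\right) \;=\; \frac{2\,T_d(z)\bigl(T_d'(z)(z-z_0) - T_d(z)\bigr)}{(z-z_0)^3} \;\leq\; 0 \quad \text{on } [z_0 - 1/n, z_0),
\]
so $|Q(z)|$ is non-increasing on that interval and hence $(z_0 - z)^{m-1}|Q(z)|^m$ is non-increasing in $z$, which is exactly the monotonicity of $\tilde s(x)/(x-2) = -(\alpha/n)(z_0 - z)^{m-1}|Q(z)|^m$ in $x$.
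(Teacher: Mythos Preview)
Your argument is correct, and for properties~3 and~4 it coincides with the paper's (both rest on Lemma~\ref{madpropertieslemma}, which is exactly Markov brothers plus $|T_d|\le 1$ on $[-1,1]$). The difference is in how you handle properties~1 and~2.

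For property~2 the paper uses a one-line structural observation that you replace by a substantially heavier analytic argument. The paper notes that $T_d$ has all $d$ of its roots real and in $(-1,1)$; hence, after the affine change $z\mapsto x$, the polynomial $\tilde s$ factors as $\tilde s(x)=C\prod_{j=1}^{d}(x-x_j)^m$ with $C>0$ and $2=x_1<x_2<\cdots<x_d$. For $x<2$ this immediately gives $\tilde s(x)/(x-2)<0$ (one factor of odd multiplicity, the rest of even multiplicity), and the logarithmic derivative
\[
\frac{d}{dx}\log\Bigl|\frac{\tilde s(x)}{x-2}\Bigr|\;=\;\frac{m-1}{x-2}+m\sum_{j\ge 2}\frac{1}{x-x_j}
\]
is a sum of strictly negative terms, so $|\tilde s(x)/(x-2)|$ is strictly decreasing and $\tilde s(x)/(x-2)$ is strictly increasing. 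This uses nothing about $T_d$ beyond real-rootedness. Your route, via the Taylor remainder identity $T_d'(z)(z-z_0)-T_d(z)=\int_z^{z_0}(z_0-t)T_d''(t)\,dt$, the Chebyshev ODE at $z_0$, and the interlacing of the zeros of $T_d,T_d',T_d''$, is valid and establishes the same monotonicity of $|Q(z)|=|T_d(z)/(z-z_0)|$, but it invokes finer structure than needed. The payoff of the paper's approach is brevity and robustness (it would work for any real-rooted base polynomial in place of $T_d$); the payoff of yours is that it gives a quantitative handle on $Q$ near $z_0$, which is not needed here but could be useful elsewhere.

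For property~1 the same simplification applies: once you know all roots of $\tilde s$ lie in $[2,\infty)$, $\tilde s$ is strictly decreasing on $(-\infty,2)$, so it suffices to check $\tilde s(1)\ge 1$ (which is your computation at $x=1$, i.e.\ $c=1-r_0$) rather than estimating $\tilde s(x)$ pointwise on all of $[0,1]$.
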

\begin{proof}
    For the first statement, just as in the proof of Corollary~\ref{madconditionscorollary}, $r_0 \leq \frac{{\pi}^2{n}}{4d^2} \leq \frac{1}{2}$. Thus,
    \[
    \left(1 + \sqrt{\frac{2(1 - r_0)}{n}}\right)^{d} \geq \left(1 + \frac{1}{\sqrt{n}}\right)^{d} \geq 2^{\frac{d}{\sqrt{n}}} \geq 8.
    \]
    Hence, by Lemma~\ref{madpropertieslemma}, $\tilde{s}(1) = \tilde{s}_{\alpha,d,m}(0) \geq {\alpha}2^{m} \geq 1$. Since $deg(\tilde{s})$ is even, all roots of $\tilde{s}$ are real and the smallest root of $\tilde{s}$ is $2$, $\tilde{s}$ is positive and decreasing when $x < 2$ so $\tilde{s}(x) \geq 1$ whenever $x \in [0,1]$, as needed.
    
    For the second statement, observe that since $\deg(\tilde{s})$ is even, all roots of $\tilde{s}$ are real and the smallest root of $\tilde{s}$ is $2$, $\frac{\tilde{s}(x)}{x-2}$ is negative and increasing whenever $x < 2$.
    
    For the third statement, observe that by Lemma \ref{madpropertieslemma}, for all $x \in [2,3]$, $\tilde{s}(x) = \tilde{s}_{\alpha,d,m}(x-1) \leq {\alpha}\frac{d^2}{n}(x-2) \leq \frac{x-2}{2n}$.
    
    For the fourth statement, observe that by Lemma \ref{madpropertieslemma}, for all $x \in [3,n]$, $\tilde{s}(x) = \tilde{s}_{\alpha,d,m}(x-1) \leq {\alpha} < \frac{1}{2n}$.
\end{proof}
\begin{corollary}
    For $d = 3\sqrt{n}$, $\alpha = \frac{1}{n}$, $m = 2\lceil{\log_2(n)}\rceil$, and $\tilde{s}(x) = \tilde{s}_{\alpha,d,m}(x-1)$ the polynomial 
    $\tilde{s}_0(x) = x - 2 - \tilde{s}(x)((n-1)x-n)$ is positive for $x \in [0,1) \cup (2,n]$ and has exactly two zeros in the interval $[1,2]$, one of which is $x = 2$.
\end{corollary}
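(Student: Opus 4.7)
The plan is to analyze $\tilde{s}_0(x) = (x-2) - \tilde{s}(x)((n-1)x - n)$ piecewise on the four subintervals $[0,1)$, $[1, 2]$, $(2, 3]$, and $[3, n]$, using the four properties of $\tilde{s}$ established in the preceding lemma together with the fact that $\tilde{s}(2)=0$ (since $x=2$ is by construction the smallest root of $T_d\left(\frac{x-2+r_0}{n}-1\right)$).

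First I would dispose of the three ``easy'' regions by direct sign analysis. On $[0,1)$, property 1 gives $\tilde{s}(x) \geq 1$, and since $(n-1)x - n \leq -1 < 0$, multiplying by $\tilde{s}(x) \geq 1$ only makes this product more negative, so $\tilde{s}_0(x) \geq (x-2) - ((n-1)x - n) = (n-2)(1-x) > 0$. On $(2, 3]$, substituting the bound $\tilde{s}(x) \leq (x-2)/(2n)$ from property 3 together with $(n-1)x - n \leq 2n-3$ gives $\tilde{s}_0(x) \geq 3(x-2)/(2n) > 0$. On $[3, n]$, substituting $\tilde{s}(x) \leq 1/(2n)$ from property 4 yields, by direct algebra, $\tilde{s}_0(x) \geq ((n+1)x - 3n)/(2n) \geq 3/(2n) > 0$. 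Finally $\tilde{s}(2)=0$ immediately gives $\tilde{s}_0(2)=0$.

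The heart of the argument, and the main obstacle, is showing that $\tilde{s}_0$ has exactly one additional zero in $(1,2)$. For this I would factor $\tilde{s}_0(x) = (x-2)(1 - \phi(x))$ with $\phi(x) = \frac{\tilde{s}(x)}{x-2}((n-1)x - n)$, so on $(1,2)$ zeros of $\tilde{s}_0$ correspond exactly to solutions of $\phi(x) = 1$. On the sub-interval $[1, n/(n-1)]$, both factors of $\phi$ are strictly negative and increasing: the first by property 2, the second because $(n-1)x - n$ is linear with positive slope and vanishes at $n/(n-1)$. Writing $\phi = A \cdot B$ and applying the product rule together with the signs $A < 0$, $A' \geq 0$, $B < 0$, $B' = n-1 > 0$ gives $\phi' = A'B + AB' \leq AB' < 0$, so $\phi$ strictly decreases from $\phi(1) = \tilde{s}(1) \geq n > 1$ (using the strengthened lower bound on $\tilde{s}_{\alpha,d,m}(0)$ from Lemma~\ref{madpropertieslemma} with the chosen $\alpha$ and $m$) down to $\phi(n/(n-1)) = 0$, so the intermediate value theorem together with strict monotonicity gives exactly one solution of $\phi = 1$ here. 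On $[n/(n-1), 2)$, property 2 still gives $\tilde{s}(x)/(x-2) < 0$ while now $(n-1)x - n \geq 0$, so $\phi(x) \leq 0 < 1$ and no further zeros arise. Combined with the zero at $x = 2$, this yields exactly two zeros of $\tilde{s}_0$ in $[1,2]$. The most delicate ingredient is the strict-monotonicity argument for $\phi$, which is precisely what forces the second crossing to be unique; everything else reduces to sign checks and substitution of the bounds from the preceding lemma.
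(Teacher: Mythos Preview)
Your proof is correct and follows essentially the same approach as the paper's: the same four-interval decomposition, the same sign substitutions on $[0,1)$, $(2,3]$ and $[3,n]$, and the same factorization $\tilde{s}_0(x)=(x-2)(1-\phi(x))$ on $[1,2]$ with the monotonicity-plus-IVT argument on $[1,\frac{n}{n-1}]$ and the sign check on $[\frac{n}{n-1},2)$. Your product-rule justification for the strict decrease of $\phi$ is a slightly more explicit version of what the paper asserts in one line; your appeal to Lemma~\ref{madpropertieslemma} to get $\phi(1)\geq n$ is correct for the corollary's stated parameters but unnecessary, since property~1 of the preceding lemma ($\tilde{s}(1)\geq 1$) already gives $\phi(1)\geq 1$, which is all that is needed.
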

\begin{proof}
    We make the following observations:
    \begin{enumerate}
        \item For all $x \in [0,1)$, 
        \[
        \tilde{s}_0(x) = x - 2 - \tilde{s}(x)((n-1)x-n) \geq x - 2 - ((n-1)x-n) = (n-2)(1-x) > 0.
        \]
        \item For all $x \in [1,2]$, $\frac{\tilde{s}_0}{x-2} = 1 - ((n-1)x-n)\frac{\tilde{s}}{x-2}$. When $x \in [\frac{n}{n-1},2]$, $((n-1)x-n)\frac{\tilde{s}}{x-2} \leq 0$ so $\frac{\tilde{s}}{x-2} > 0$. When $x \in [1,\frac{n}{n-1})$, both $((n-1)x-n)$ and $\frac{\tilde{s}}{x-2}$ are negative and increasing so $((n-1)x-n)\frac{\tilde{s}}{x-2}$ is positive and decreasing and thus $\frac{\tilde{s}_0}{x-2}$ is increasing. Since $\frac{\tilde{s}_0(1)}{1-2} \leq 0$ and $\frac{\tilde{s}_0(\frac{n}{n-1})}{\frac{n}{n-1}-2} > 0$, $\frac{\tilde{s}_0(x)}{x-2}$ must have exactly one zero in the interval $[1,\frac{n}{n-1}]$.
        \item For all $x \in (2,3]$, $\tilde{s}_0(x) = x - 2 - \tilde{s}(x)((n-1)x-n) \geq x-2 - \frac{(n-1)x-n}{2n}(x-2) > 0$.
        \item For all $x \in [3,n]$, $\tilde{s}_0(x) = x - 2 - \tilde{s}(x)((n-1)x-n) \geq x-2 - \frac{(n-1)x-n}{2n} > \frac{x}{2} - \frac{3}{2} \geq 0$.
    \end{enumerate}
\end{proof}
\begin{corollary}
    $\tilde{s}_0(|x|)$ is a sum of squares of degree $O(\sqrt{n}\log(n))$ modulo the Boolean axioms.
\end{corollary}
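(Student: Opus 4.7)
The plan is to read off the desired sum of squares decomposition directly from the factorization of $\tilde{s}_0$ that the preceding corollary essentially hands us. From that corollary, $\tilde{s}_0(x)$ is a univariate polynomial which is strictly positive on $[0,1) \cup (2,n]$ and has exactly two real zeros inside $[1,2]$ (counted with multiplicity, one of them being $x=2$). Writing these zeros as $a,b \in [1,2]$, I factor
\[
\tilde{s}_0(x) \;=\; p(x)\,(x-a)(x-b),
\]
where $p(x)$ is a univariate polynomial of degree $\deg(\tilde{s}_0) - 2 = O(\sqrt{n}\log(n))$. By the sign analysis of $\tilde{s}_0$, the quotient $p(x)$ must be strictly positive on all of $[0,n]$: on $[0,1)\cup(2,n]$ both $\tilde{s}_0(x)$ and $(x-a)(x-b)$ are positive, and on $[1,2]$ the sign of $(x-a)(x-b)$ matches the sign of $\tilde{s}_0(x)$ (both nonpositive, with the same zero set).

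With this factorization in hand, I argue each factor is an SoS modulo Boolean axioms after substituting $|\mathbf{x}| = \sum_i x_i$. First, because $p$ is strictly positive on $[0,n]$, Theorem~\ref{thm:Blekherman_TheSecond} gives a univariate SoS representation of $p$ on $[0,n]$ of degree $O(\sqrt{n}\log(n))$; by Remark~\ref{rem:univariate_to_multivaraite_SoS}, this translates into a degree $O(\sqrt{n}\log(n))$ SoS certificate for $p(|\mathbf{x}|)$ modulo the Boolean axioms. Second, because the two roots $a,b$ both lie in the interval $[1,2]$, Corollary~\ref{cor:SQF_SoS_degree_closer_roots_crude_bound} applied with $k = 2$ reduces the task of certifying $(|\mathbf{x}|-a)(|\mathbf{x}|-b)$ to certifying $q_2(\mathbf{x})$, and Theorem~\ref{thm:SQF_SoS_degree} with $k=2$ delivers a degree $O(\sqrt{n}\log(n))$ SoS certificate for $q_2$ modulo the Boolean axioms.

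Multiplying these two SoS certificates gives an SoS representation of $\tilde{s}_0(|\mathbf{x}|)$ modulo the Boolean axioms. The degree of the resulting certificate is the sum of the two degrees, which is $O(\sqrt{n}\log(n))$, as required.

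The step I expect to require the most care is verifying positivity of the quotient $p$ on the entire interval $[0,n]$. Concretely, I need to confirm that the two roots of $\tilde{s}_0$ in $[1,2]$ identified in the previous corollary exhaust all real roots of $\tilde{s}_0$ in $[0,n]$ (so that no hidden complex-conjugate-pair factor flips the sign or forces $\deg(\tilde{s}_0)$ to be odd), and that the leading behavior of $\tilde{s}_0$ is compatible with the positivity of $p$ on $[0,n]$. Both of these follow from the sign analysis already carried out, together with the fact that $\deg(\tilde{s}_0) = \deg(\tilde{s})+1$ has the right parity so that $(x-a)(x-b)$ cleanly divides it over the reals; once this bookkeeping is done, the remainder of the argument is assembly.
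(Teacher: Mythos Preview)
Your proposal is correct and follows essentially the same approach as the paper: factor $\tilde{s}_0(x)=p(x)(x-a)(x-b)$ with $a,b\in[1,2]$, certify the positive quotient $p$ on $[0,n]$ via Theorem~\ref{thm:Blekherman_TheSecond} and Remark~\ref{rem:univariate_to_multivaraite_SoS}, and certify $(x-a)(x-b)$ via Corollary~\ref{cor:SQF_SoS_degree_closer_roots_crude_bound} and Theorem~\ref{thm:SQF_SoS_degree} with $k=2$. Your closing worries about parity and hidden complex factors are unnecessary, since the preceding corollary already guarantees that both roots in $[0,n]$ are real and lie in $[1,2]$ (one being $x=2$), so the quotient $p$ automatically has no zeros on $[0,n]$ and is positive there.
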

\begin{proof}
    Since $\tilde{s}_0(x) = x - 2 - \tilde{s}(x)((n-1)x-n)$ is positive for $x \in [0,1) \cup (2,n]$ and has exactly two zeros in the interval $[1,2]$, one of which is $x = 2$, we can write 
    \[
	\tilde{s}_{0}(x)= \tilde{p}(x-a)(x-2),
	\]
	for some $a \in [1,2)$ where $\tilde{p}(x)$ is positive for has no real roots in the interval $[0,n]$. Since $\tilde{p}(x)$ is positive and has no real roots in the interval $[0,n]$, $\tilde{p}(|x|)$ is a sum of squares modulo the Boolean axioms. By Theorem \ref{thm:SQF_SoS_degree} and Corollary \ref{cor:SQF_SoS_degree_closer_roots_crude_bound}, $(x-a)(x-2)$ is a sum of squares of degree $O(\sqrt{n}\log(n))$ modulo the Boolean axioms. 
\end{proof}
Thus, there exists a degree $O(\sqrt{n}\log (n))$ SoS certificate of nonnegativity for the SC problem.

{\small
\bibliographystyle{abbrv}
\bibliography{bibliography}
}

\appendix

\section{Alternative Proof for the SoS rank upper bound for the SC problem}
\label{sec:SC_second_proof}
In this section, we 
provide an alternative proof of Theorem~\ref{thm:SC_SoS_rank}. More precisely, we prove an $O(\sqrt{n}\log(n) )$ upper bound on the SoS rank for the SC problem without using Theorem \ref{thm:SQF_SoS_degree}.

By the problem formulation, Definition~\eqref{eq:MK_def}, and  Equation~\eqref{eq:intro_SoS_d_CPOP}, the SoS rank for the SC Problem is the smallest $d$ for which there exist SoS polynomials $s_0 \in \Sigma_{n,2d+2}$ and $s_i \in \Sigma_{n,2d}$ for $i \in[n]$ such that 
$$
\sum_{i=1}^n x_i -2 = s_0(\mathbf{x}) + \sum_{i=1}^n s_i \left(\sum_{\substack{j=1 \\ j \neq i}}^n x_j -1\right).
$$
Equivalently, it is the smallest positive integer $d$ such that $\sum_{i=1}^n x_i -2 \in \Sigma_{n,d}^\mathcal{G}$.

To prove the SoS rank upper bound for the SC problem, we define the polynomials
$$
	h_1(\mathbf{x}):=|\mathbf{x}| -1
$$
	and
$$	
    h_2(\mathbf{x}):=|\mathbf{x}|\left(|\mathbf{x}| -2\right)
$$
and require the following lemma, in which we use the asymmetry inherent to the constraints of the SC problem.
\begin{lemma}
	\label{lem:two_polnomials_in_sigma}
	For polynomials $h_1,h_2$ it holds that $h_1(\mathbf{x})\in \Sigma_{n,0}^\mathcal{G}$ and $ h_2(\mathbf{x})\in \Sigma_{n, 1}^\mathcal{G}$.
\end{lemma}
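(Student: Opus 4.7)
The plan is to produce each representation by direct algebraic construction, relying on summing the constraints $g_j$ for $h_1$ and on a multiplier-substitution trick for $h_2$. Recall that $g_j(\mathbf{x}) = |\mathbf{x}| - x_j - 1$, and that $\cG$ contains both the SC-constraints $g_j$ and the Boolean axioms $\pm(x_i^2-x_i) \in \cH$.

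For $h_1 \in \Sigma_{n,0}^\cG$, I would start from the identity $\sum_{j=1}^n g_j(\mathbf{x}) = (n-1)|\mathbf{x}| - n$, obtained by summing $g_j$ over $j$. Dividing by $n-1$ and rearranging yields
\[
h_1(\mathbf{x}) \;=\; |\mathbf{x}| - 1 \;=\; \tfrac{1}{n-1}\sum_{j=1}^n g_j(\mathbf{x}) \;+\; \tfrac{1}{n-1}.
\]
Both the coefficient of each $g_j$ and the free term are the positive constant $\tfrac{1}{n-1}$, hence belong to $\Sigma_{n,0}$, so this exhibits $h_1$ as an element of $\Sigma_{n,0}^\cG$.

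For $h_2 \in \Sigma_{n,1}^\cG$, the natural guess $\sum_j x_j g_j$ already equals $h_2$ modulo the Boolean axioms, but $x_j$ itself is not a sum of squares. My approach would therefore be to use $x_j^2 = (x_j)^2 \in \Sigma_{n,1}$ as the multiplier and to absorb the resulting residual through the Boolean axioms. A routine expansion, combined with the factorization $x_j^3 + x_j^2 - 2x_j = (x_j+2)(x_j^2 - x_j)$, gives
\[
h_2(\mathbf{x}) - \sum_{j=1}^n x_j^2 \, g_j(\mathbf{x}) \;=\; \sum_{j=1}^n (2 + x_j - |\mathbf{x}|)(x_j^2 - x_j).
\]
Each coefficient on the right is a linear polynomial $p$, and any such $p$ admits the decomposition $p = \tfrac{1}{4}\bigl((1+p)^2 - (1-p)^2\bigr)$, expressing it as a difference of two elements of $\Sigma_{n,1}$. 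Since both $+(x_j^2-x_j)$ and $-(x_j^2-x_j)$ lie in $\cH \subset \cG$, this lets me rewrite every residual summand as (element of $\Sigma_{n,1}$) $\cdot$ (element of $\cG$), completing a representation of $h_2$ in $\Sigma_{n,1}^\cG$ with $s_0 = 0$.

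The only delicate point is the passage from the informal multiplier $x_j$ to the genuine SoS multiplier $x_j^2$: one has to compute explicitly the Boolean-axiom residual introduced by this substitution and verify that its linear prefactor $2 + x_j - |\mathbf{x}|$ can be certified inside $\Sigma_{n,1}$ via the standard linear-polynomial-as-difference-of-squares identity. The remaining computations are mechanical, and since no auxiliary SoS term $s_0$ is required in either case, the degree condition on $s_0$ in the definition of $\Sigma_{n,d}^\cG$ is trivially satisfied.
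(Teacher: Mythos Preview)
Your proof is correct and essentially identical to the paper's: the $h_1$ argument is the same, and for $h_2$ the paper writes $h_2 = \sum_j x_j^2 g_j + \sum_j (1 - g_j)(x_j^2 - x_j)$, which is precisely your residual since $1 - g_j = 2 + x_j - |\mathbf{x}|$, with the same (there implicit) difference-of-squares handling of the linear prefactor via $\pm(x_j^2-x_j)\in\cG$. One cosmetic slip: in the $h_1$ case the constant $\tfrac{1}{n-1}$ \emph{is} the $s_0$ term, so your closing remark that no $s_0$ is required in either case is not quite accurate, though you already noted earlier that this constant lies in $\Sigma_{n,0}$, so nothing is affected.
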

\begin{proof}
	Consider the first polynomial, $h_1$, and note that
	$$\sum_{i=1}^n x_i -1 =\frac{1}{n-1} \sum_{i=1}^n \left(\sum_{\substack{j=1 \\ j \neq i}}^n x_j -1\right) + \frac{1}{n-1}\in \Sigma_{n,0}^\mathcal{G}.$$
	Polynomial $h_2$ can be written as
	{\small
	\begin{align*}
		\sum_{j=1}^n x_j\left(\sum_{i=1}^n x_i -2\right) &= 
		\sum_{j=1}^n \left( x_j \left( \sum_{i=1}^n x_i -x_j -1 \right) +(x_j^2-x_j) \right)\\
		&=\sum_{j=1}^n \left( x_j^2 \left( \sum_{\substack{i=1 \\ i \neq j}}^n x_i -1 \right) - (x_j^2 -x_j) \left( \sum_{\substack{i=1 \\ i \neq j}}^n x_i -1 \right)+(x_j^2-x_j) \right) \in \Sigma_{n, 1}^\cG.
	\end{align*}}
\end{proof}
Although Lemma~\ref{lem:two_polnomials_in_sigma} uses asymmetry in the constraints of the SC problem, both $h_1$ and $h_2$ are symmetric polynomials. We can thus define polynomials $\tilde{h}_1,~\tilde{h}_2:~\mathbb{R} \rightarrow \mathbb{R}$ such that $\tilde{h}_1(|\mathbf{x}|)= h_1(\mathbf{x})$, and $\tilde{h}_2(|\mathbf{x}|)= h_2(\mathbf{x})$, respectively. We are working towards a proof of the existence of polynomials $p_1, p_2:~\mathbb{R} \rightarrow \mathbb{R}$ such that
\begin{equation}
	\label{eq:SC_certificate_univariate}
	(x-2)- p_1(x) \tilde{h}_1(x)-p_2(x)\tilde{h}_2(x) \geq 0 \qquad \qquad \text{for all } x \in [0,n].
\end{equation}

\subsection{Construction of polynomials $\mathbf{p_1,p_2}$}

We consider necessary, but not sufficient requirements that the polynomials $p_1$ and $p_2$ have to satisfy, that is, $p_1(2)\tilde{h}_1(2) + p_2(2)\tilde{h}_2(2)=0$, $\left[p_1\tilde{h}_1+p_2\tilde{h}_2\right]^{'}(2)=1$, and $\left[p_1\tilde{h}_1+p_2\tilde{h}_2\right]^{''}(2)<0$. It is easy to check that these requirements are satisfied if $p_1$ has a double root at $x=2$, $p_2(2)=1/2$, and $1 + 4 p^{'}_2(2) +2 \frac{p_1(2)}{(x-2)^2} <0$. We use these guidelines to construct polynomials 
\begin{align}
	\label{eq:Cheby_1}
	p_1(x)&:=\frac{1}{2n^2c_1}~(x-2)^2~ T^2_{2\sqrt{n}\log(n)}\left(2 \frac{x-2}{n}-1 \right),\\
	p_2(x)&:=\frac{1}{2nc_2}~T^2_{2\sqrt{n}\log(n)}\left(2\frac{x-3}{n}-1 \right),\nonumber
\end{align}
where $c_1$ and $c_2$ are constants equal to $\frac{1}{2n^2}T^2_{2\sqrt{n}\log(n)}\left(-\frac{2}{n}-1 \right)$ and $\frac{1}{n}~T^2_{2\sqrt{n}\log(n)}\left(-\frac{2}{n}-1 \right)$, respectively, such that $p_1(1)=1$ and $p_2(2)=1/2$.
\begin{lemma}
	\label{lem:p1_properties}
	There exists $C \in \mathbb{N}$ such that for $n \geq C$, the polynomial $p_1$ satisfies the following properties:
	\begin{enumerate}
		\item $p_1(x) \geq 4$ \qquad for $x\in [0,\frac{1}{2}]$. \label{prop:1_p1}
		\item $p_1(x) \leq \left(-0.9(x-1)+1 \right) (x-2)^2$ \qquad for $x\in [1,2]$.\label{prop:2_p1}
		\item $p_1(x) \leq \frac{1}{2n^2}(x-2)^2$ \qquad for $x \in [2,n]$. \label{prop:3_p1}
	\end{enumerate}
\end{lemma}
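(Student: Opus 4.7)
I would start by rewriting
$$
p_1(x) = (x-2)^2\, R(x), \qquad R(x) := \frac{T_d^2\!\left(\frac{2(x-2)}{n}-1\right)}{T_d^2\!\left(-1-\frac{2}{n}\right)}, \qquad d := 2\sqrt{n}\log(n),
$$
so that $R(1)=1$ by the choice of $c_1$. The three properties control $p_1$ on three disjoint subintervals, so I would treat them separately, each via Lemma~\ref{lem:Chebyshev_1-c/n_properties}.

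Properties~\ref{prop:3_p1} and~\ref{prop:1_p1} are short. For $x\in[2,n]$, the Chebyshev argument $\tfrac{2(x-2)}{n}-1$ lies in $[-1,\,1-4/n]\subseteq[-1,1]$, so $T_d^2$ of it is at most $1$, while Lemma~\ref{lem:Chebyshev_1-c/n_properties} gives $T_d^2(-1-2/n)\geq \tfrac14(1+2/\sqrt{n})^{2d}=\Theta(n^8)$; for $n$ large this exceeds $2n^2$, yielding $R(x)\leq 1/(2n^2)$. For $x\in[0,1/2]$, both $(x-2)^2$ and $T_d^2(2(x-2)/n-1)$ are decreasing in $x$ (the latter because its argument moves towards $-1$), so $p_1$ attains its minimum on $[0,1/2]$ at $x=1/2$, where $p_1(1/2) = \tfrac94\, T_d^2(-1-3/n)/T_d^2(-1-2/n)$. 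Plugging in the refined ``constant $c$, $n$ large'' upper/lower Chebyshev bounds, this ratio grows polynomially in $n$ and exceeds $16/9$ for $n$ large enough.

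Property~\ref{prop:2_p1} is the main obstacle, since the inequality $R(x)\leq 1-0.9(x-1)$ is sharp at $x=1$. Reparametrize by $t=x-1\in[0,1]$ and use the hyperbolic identity $T_d^2(-1-c/n)=\cosh^2(d\theta_c)$ with $\theta_c=\operatorname{arccosh}(1+c/n)$; letting $A:=d\theta_2$ and $B(t):=d\theta_{2(1-t)}$, the claim becomes $\cosh^2 B(t)/\cosh^2 A \leq 1-0.9\,t$. I would apply
$$\cosh^2 A - \cosh^2 B = \sinh(A+B)\sinh(A-B)$$
to recast this as a positive lower bound on $\sinh(A+B)\sinh(A-B)/\cosh^2 A$, and exploit concavity of $\theta_c$ in $c$ (immediate from $\theta_c''<0$) via the tangent inequality at $c=2$ to obtain $A-B(t)\geq 2dt\,\theta_2'\approx 2t\log n$. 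For $n$ large (so $A\approx 4\log n$), the asymptotics $\sinh(A+B)/\cosh^2 A=(1+o(1))\cdot 2e^{B-A}$ combined with $\sinh(A-B)\geq A-B$ yield $1-R(x)\geq (1-o(1))\bigl(1-e^{-2(A-B(t))}\bigr)\geq (1-o(1))(1-n^{-4t})$, which in turn dominates $0.9\,t$ on $[0,1]$ by a short concavity argument on the difference $(1-0.9t)-n^{-4t}$, which is concave in $t$, vanishes at $t=0$, and is positive at $t=1$ for large $n$. The delicate point will be keeping the $(1+o(1))$ factors honest uniformly as $t\to 0$ where the inequality is tight; I would address this using the exact expansion $\cosh(A-u)/\cosh A = \cosh u - \tanh A\,\sinh u$ for small $u = A-B(t)$ together with $\tanh A$ close to $1$.
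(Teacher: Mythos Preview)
Your treatment of Properties~\ref{prop:1_p1} and~\ref{prop:3_p1} is essentially the same as the paper's: reduce to the value at $x=1/2$ by monotonicity and estimate the Chebyshev ratio via Lemma~\ref{lem:Chebyshev_1-c/n_properties}; and for $x\in[2,n]$ use $|T_d|\le 1$ on $[-1,1]$ together with the lower bound $T_d^2(-1-2/n)\ge \tfrac14(1+2/\sqrt n)^{2d}$ to force $c_1\ge 1$.

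For Property~\ref{prop:2_p1}, however, you are working much harder than necessary, and the paper's argument is worth knowing. The paper simply observes that the map $x\mapsto T_d^2\bigl(\tfrac{2(x-2)}{n}-1\bigr)$, and hence $R(x)$, is \emph{convex} on $[1,2]$ (its argument lies in $[-1-2/n,-1]$, where $T_d$ and $T_d''$ have the same sign and hence $(T_d^2)''=2(T_d')^2+2T_dT_d''>0$). The inequality $R(x)\le 1-0.9(x-1)$ then reduces to checking the two endpoints: at $x=1$ both sides equal $1$, and at $x=2$ Property~\ref{prop:3_p1} gives $R(2)\le 1/(2n^2)<0.1$. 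Since a convex function lies below its chord and the chord from $(1,1)$ to $(2,R(2))$ has slope $R(2)-1<-0.9$, the claim follows in one line.

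Your hyperbolic route via $\cosh^2 A-\cosh^2 B=\sinh(A+B)\sinh(A-B)$ and the tangent inequality for $\theta_c$ is correct in spirit and would eventually close, but as you yourself flag, the $(1+o(1))$ factors need to be made uniform in $t$ near $0$, which forces you into the exact expansion $\cosh(A-u)/\cosh A=\cosh u-\tanh A\,\sinh u$ and a case analysis. All of that machinery is buying you something the convexity argument gives for free. The convexity observation also makes transparent \emph{why} the constant $0.9$ works: any constant strictly below $1-R(2)$ would do, and $R(2)$ is tiny.
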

\begin{proof}
	Since $p_1(x)$ is decreasing for $x \leq 1$, to prove Property~\eqref{prop:1_p1} it is enough to show that $p_1(\frac{1}{2})~\geq~4$. By Lemma~\ref{lem:Chebyshev_1-c/n_properties} and for sufficiently big $n$, it holds
	$$
	p_1(1/2)=\frac{~\frac{9}{4}~ T^2_{2\sqrt{n}\log(n)}\left(\frac{-3}{n}-1\right)}{~ T^2_{2\sqrt{n}\log(n)}\left(-\frac{2}{n}-1\right)}  
	\geq
	\frac{1}{2}\left(\frac{1+\sqrt{\frac{6}{n}}}{1+\sqrt{\frac{5}{n}}}\right)^{4\sqrt{n} \log(n)}.
	$$
	Since $\frac{1}{2}\left(\frac{1+\sqrt{\frac{6}{n}}}{1+\sqrt{\frac{5}{n}}}\right)^{4\sqrt{n} \log(n)} \geq 4$ for $n\geq 32$ and by monotonicity, Property~\eqref{prop:1_p1} is satisfied.
	
	To prove Property~\eqref{prop:3_p1}, note that for every $x \in [2,n]$ and $d \in \mathbb{N}$ we have $T^2_d(2\frac{x-2}{n}-1) \leq 1$ and for every $n \geq 2$, by Lemma~\ref{lem:Chebyshev_1-c/n_properties}, we have
	$$c_1=\frac{1}{2n^2}~ T^2_{2\sqrt{n}\log(n)}\left(-\frac{2}{n}-1\right) \geq 
	\frac{1}{2n^2}\frac{1}{4}\left(1+\sqrt{\frac{4}{n}}\right)^{4\sqrt{n}\log(n)}  \geq 1.$$
	
	To prove Property~\eqref{prop:2_p1}, we show that $\frac{1}{2n^2}T^2_{2\sqrt{n}\log(n)}\left(2 \frac{x-2}{n}-1 \right) \leq \left(-0.9(x-1)+1 \right)$ for every $x\in [1,2]$. By construction, it is satisfied for $x=1$ and by Property~\eqref{prop:3_p1}, it is satisfied for $x=2$. Since the function $T^2_{2\sqrt{n}\log(n)}\left(2 \frac{x-2}{n}-1 \right)$ is convex in the interval $[1,2]$, the property is satisfied for $x\in [1,2]$.
\end{proof}

\begin{lemma}
	\label{lem:p2_properties}
	There exists a constant $C \in \mathbb{N}$ such that for $n \geq C$, the polynomial $p_2$ satisfies the following properties:
	\begin{enumerate}
		\item $p_2(x) \geq 4$ \qquad for $x\in [0,1]$. \label{prop:1_p2}
		\item $p_2(2) =\frac{1}{2}$.\label{prop:2_p2}
		\item $p^{'}_2(x)\leq -1$ \qquad for $x\in [1,2]$. \label{prop:3_p2}
		\item $p_2 (x) \leq -0.45 (x-2) +\frac{1}{2}$ \qquad for $x\in [2,3]$. \label{prop:4_p2}
		\item $p_2(x) \leq \frac{1}{2n}$ \qquad for $x \in [3,n]$. \label{prop:5_p2}
	\end{enumerate}
\end{lemma}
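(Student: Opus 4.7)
The plan is to mirror the strategy of Lemma~\ref{lem:p1_properties}: exploit that the argument $y=2(x-3)/n-1$ of the Chebyshev polynomial lies in $(-\infty,-1]$ precisely when $x\le 3$, apply the explicit magnitude estimates of Lemma~\ref{lem:Chebyshev_1-c/n_properties} on that regime, and use that $T_d^2$ is convex and decreasing in $y$ on $(-\infty,-1]$. The convexity follows from Rolle's theorem, since $(T_d^2)''$ is a degree-$(2d-2)$ polynomial all of whose roots lie in $[-1,1]$ (the $2d$ roots of $T_d^2$ do, with multiplicity two), so $(T_d^2)''>0$ on $(-\infty,-1)\cup(1,\infty)$. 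Composing with the increasing affine map $y=2(x-3)/n-1$ thus makes $p_2$ itself convex and strictly decreasing on $[0,3]$.

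Property~2 holds by the choice of $c_2$. Property~5 reduces to showing $c_2\ge 1$ for large $n$, which follows from $c_2=\frac{1}{n}T_d^2(-1-2/n)\ge \frac{1}{4n}(1+2/\sqrt{n})^{4\sqrt{n}\log n}=\Omega(n^{7})$ via Lemma~\ref{lem:Chebyshev_1-c/n_properties}. Property~1 then follows from monotonicity on $[0,3]$: $\min_{[0,1]}p_2=p_2(1)=T_d^2(-1-4/n)/(2\,T_d^2(-1-2/n))$. Combining the lower bound $T_d^2(-1-4/n)\ge \tfrac14(1+\sqrt{8/n})^{2d}$ with the sharper upper bound $T_d^2(-1-2/n)\le(1+\sqrt{5/n})^{2d}$ (valid for $n$ large), we obtain
\[
p_2(1)\ \ge\ \frac{1}{8}\left(\frac{1+\sqrt{8/n}}{1+\sqrt{5/n}}\right)^{2d}\ =\ \Omega\!\left(n^{4(\sqrt{8}-\sqrt{5})}\right),
\]
which exceeds $4$ once $n$ is large enough. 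Property~4 is obtained from convexity of $p_2$ on $[2,3]$: the graph lies below the chord joining $(2,\tfrac12)$ and $(3,p_2(3))$, and since Property~5 at $x=3$ gives $p_2(3)\le 1/(2n)$, the chord's slope $p_2(3)-\tfrac12$ is at most $-0.45$ for $n$ large.

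The main obstacle is Property~3, which requires an asymptotically sharp bound on the \emph{derivative} $p'_2$ rather than on $p_2$ itself. Convexity reduces it to showing $p'_2(2)\le -1$, and a direct computation gives
\[
p'_2(2)\ =\ \frac{2\,T'_d(-1-2/n)}{n\,T_d(-1-2/n)}.
\]
Using the explicit formulas $T_d(y)=\tfrac12\bigl((y+\sqrt{y^2-1})^d+(y-\sqrt{y^2-1})^d\bigr)$ and $T'_d(y)=d\,U_{d-1}(y)=\frac{d}{2\sqrt{y^2-1}}\bigl((y+\sqrt{y^2-1})^d-(y-\sqrt{y^2-1})^d\bigr)$ at $y=-1-2/n$, where $\sqrt{y^2-1}\sim 2/\sqrt{n}$, one finds $T'_d(y)/T_d(y)\sim -\tfrac12\, d\sqrt{n}$, and hence $p'_2(2)\sim -d/\sqrt{n}=-2\log n$, which is at most $-1$ for $n$ large. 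Taking $C$ to be the maximum of the finitely many thresholds above completes the proof.
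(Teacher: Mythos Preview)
Your proof is correct and follows essentially the same template as the paper's: both establish monotonicity and convexity of $p_2$ on $[0,3]$, invoke Lemma~\ref{lem:Chebyshev_1-c/n_properties} for Properties~1 and~5, read Property~2 off the definition, and deduce Property~4 from convexity plus the endpoint values. Your Rolle-based justification of convexity of $T_d^2$ on $(-\infty,-1]$ is a welcome addition of rigor where the paper simply asserts convexity.

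The one genuine difference is Property~3. The paper reduces, as you do, to $p_2'(2)\le -1$ via convexity, but then bounds the derivative by a secant argument: it compares the values of $T_d$ at $x=2$ and $x=2.5$ (i.e.\ at arguments $-1-2/n$ and $-1-1/n$), shows via Lemma~\ref{lem:Chebyshev_1-c/n_properties} that the ratio of squares exceeds $2$, and combines this drop-by-half with convexity to conclude $p_2'(2)\le -1$. You instead compute $p_2'(2)=\tfrac{2}{n}\,T_d'(y)/T_d(y)$ directly from the hyperbolic representation of $T_d$ and $U_{d-1}$, obtaining the sharper asymptotic $p_2'(2)\sim -2\log n$. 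Your route is cleaner and more quantitative; the paper's route stays entirely within the black-box magnitude bounds of Lemma~\ref{lem:Chebyshev_1-c/n_properties} and avoids introducing the explicit formula for $T_d'$. Both are valid.
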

\begin{proof}
	Since $p_2(x)$ is decreasing for $x \leq 1$, to prove Property~\eqref{prop:1_p1}, it is enough to show that $p_2(1)~\geq~4$. For sufficiently big $n$ we get:
	$$
	p_2(1):=\frac{\frac{1}{2}~T^2_{2\sqrt{n}\log(n)}\left(-\frac{4}{n}-1\right)}{~T^2_{2\sqrt{n}\log(n)}\left(-\frac{2}{n}-1\right)} \geq
	 \frac{1}{8}  \left(\frac{ -1-\sqrt{\frac{8}{n}}   }{ -1-\sqrt{\frac{5}{n}}  }\right)^{4 \sqrt{n}\log(n)}.
	$$
	Since
	$$\frac{1}{8}\left(\frac{ -1-\sqrt{\frac{8}{n}}   }{ -1-\sqrt{\frac{5}{n}}  }\right)^{4 \sqrt{n}\log(n)}\geq 4$$
	for $n\geq 13$ and by monotonicity, Property~\eqref{prop:1_p1} is satisfied.
	
	Property~\eqref{prop:2_p2} is satisfied by construction.
	
	Since for every $x \in [3,n]$ and $d \in \N$ we have $\left|{T_d(2\frac{x-3}{n}-1)}\right| \leq 1$ and for every $n \geq 2$, by Lemma~\ref{lem:Chebyshev_1-c/n_properties},
	we have
	$$c_2=\frac{1}{n}~ T^2_{2\sqrt{n}\log(n)}\left(-\frac{2}{n}-1\right) \geq \frac{1}{n}\frac{1}{4} \left(-1-\sqrt{\frac{4}{n}} \right)^{4\sqrt{n}\log(n)}  \geq  1,$$
	Property~\eqref{prop:5_p2} is satisfied.
	
	Since $p_2(x)$ is convex for $x \in [1,2]$, to prove Property~\eqref{prop:3_p2}, it is enough to show $p_2^{'}(2)\leq -1$. Note that $\frac{\partial T_d(x)}{\partial x}= dU_{d-1}(x)$ and $\frac{\partial T_d^2(x)}{\partial x}= 2dT_d(x)U_{d-1}(x)$, where $U_d(x)$ is a Chebyshev polynomial of the second type. Thus,
	$$
	p_2^{'}(x)=\frac{4 \log (n) T_{2 \sqrt{n} \log (n)}\left(\frac{2 (x-3)}{n}-1\right) U_{2 \sqrt{n} \log (n)-1}\left(\frac{2 (x-3)}{n}-1\right)}{\sqrt{n} T_{2 \sqrt{n} \log (n)}\left(-1-\frac{2}{n}\right)},
	$$
	which implies that
	$$
	p^{'}_2(2)=\frac{4 \log (n) U_{2 \sqrt{n} \log (n)-1}\left(-1-\frac{2}{n}\right)}{\sqrt{n} T_{2 \sqrt{n} \log (n)}\left(-1-\frac{2}{n}\right){}} =
	\left[  \frac{\partial }{\partial x} \frac{T_{2 \sqrt{n} \log (n)}\left(2\frac{x-3}{n}-1\right)}{T_{2 \sqrt{n} \log (n)}\left(-1-\frac{2}{n}\right)}  \right]\left(2\right).
	$$
	Since $T_{2 \sqrt{n} \log (n)}\left(2\frac{x-3}{n}-1\right)$ for $x=2.5$ takes at most half of the value for $x=2$ and since $T_{2 \sqrt{n} \log (n)}\left(2\frac{x-3}{n}-1\right)$ is convex in the interval $[2,3]$, $p_2^{'}(x) \leq -1$.  By Lemma~\ref{lem:Chebyshev_1-c/n_properties},
	$$
	\frac{T^2_{2\sqrt{n}\log(n)}\left(-\frac{2}{n}-1\right)}{~T^2_{2\sqrt{n}\log(n)}\left(-\frac{1}{n}-1\right)} \geq
	\frac{1}{4}  \left(\frac{ -1-\sqrt{\frac{4}{n}}   }{ -1-\sqrt{\frac{3}{n}}  }\right)^{2 \sqrt{n}\log(n)}.
	$$
	Since
	$$\frac{1}{4}\left(\frac{ -1-\sqrt{\frac{4}{n}}   }{ -1-\sqrt{\frac{3}{n}}  }\right)^{2 \sqrt{n}\log(n)} \geq 2$$
	for $n\geq 100$ and by monotonicity, Property~\eqref{prop:3_p2} is satisfied.

	By Property~\eqref{prop:2_p2}, Property~\eqref{prop:4_p2} holds for $x=2$. By Property~\eqref{prop:5_p2}, it holds for $x=3$ and $n\geq 10$. Since $p_2(x)$ is convex for $x \in [2,3]$, the property holds for $x \in [2,3]$.
\end{proof}

Now we are ready to prove the main lemma of this section.
\begin{lemma}
	\label{lem:f_bigger_than_p_1h_1+p_2h_2}
	It holds that
	\begin{equation}
		f(x):=x-2 - p_1(x)h_1(x)-p_2(x)h_2(x) \geq 0 \qquad \text{ for } x \in [0,n].
	\end{equation}
\end{lemma}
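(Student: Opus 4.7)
The plan is to split $[0, n]$ at the breakpoints $\tfrac12, 1, 2, 3$ that appear in Lemmas~\ref{lem:p1_properties} and~\ref{lem:p2_properties} and verify $f(x) \geq 0$ on each of the five resulting sub-intervals. The guiding principle is that by construction $p_1$ has a double zero at $x = 2$ and $p_2(2) = \tfrac12$, so $f(2) = 0 = f'(2)$; accordingly, in the intervals $[1,2]$ and $[2,3]$ adjacent to the double zero I expect to extract a factor $(x-2)^2$ from a suitable lower bound on $f$ and show the remaining factor is nonnegative, while on $[0,1]$ the pointwise lower bounds $p_1 \geq 4$, $p_2 \geq 4$ do the heavy lifting, and on $[3, n]$ the smallness of $p_1, p_2$ takes over.

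On the two left intervals, both $h_1 = x-1$ and $h_2 = x(x-2)$ are nonpositive while $p_1, p_2 \geq 0$ as sums of squares, so the terms $-p_1 h_1$ and $-p_2 h_2$ have the right sign. On $[0, \tfrac12]$ I will use $p_1(x) \geq 4$ (Property~\ref{prop:1_p1}) together with $h_1 \leq -\tfrac12$ to get $-p_1 h_1 \geq 2 \geq 2 - x$; on $[\tfrac12, 1]$ the analogous bound $p_2(x) \geq 4$ (Property~\ref{prop:1_p2}) with $h_2 \leq -\tfrac34$ gives $-p_2 h_2 \geq 3 \geq 2 - x$. On the two right intervals I will use the upper bounds $p_1(x) \leq \tfrac{(x-2)^2}{2n^2}$ (Property~\ref{prop:3_p1}) combined with $p_2(x) \leq \tfrac12 - 0.45(x-2)$ on $[2,3]$ (Property~\ref{prop:4_p2}) and $p_2(x) \leq \tfrac{1}{2n}$ on $[3,n]$ (Property~\ref{prop:5_p2}). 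In each case one $(x-2)$ factor drops out of $f$: on $[2,3]$ the algebraic identity $1 - \bigl(\tfrac12 - 0.45(x-2)\bigr)x = 0.45(x-2)(x - \tfrac{10}{9})$ pulls out a second $(x-2)$ and leaves the positive remainder $(x-2)^2\bigl[0.45(x-\tfrac{10}{9}) - \tfrac{x-1}{2n^2}\bigr] \geq 0$ for $n$ sufficiently large, while on $[3, n]$ nonnegativity reduces to verifying $2n^2 \geq (x-2)(x-1) + nx$, i.e.\ $x^2 + (n-3)x + 2 \leq 2n^2$, which is monotone in $x$ on $[3,n]$ and tight only at $x = n$.

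The delicate case, and what I expect to be the main obstacle, is the middle interval $[1, 2]$: this is where the signs of $h_1$ and $h_2$ do not line up trivially and where the double root at $x = 2$ must be reconstructed from the bounds on $p_1, p_2$. I will combine $p_1(x) \leq \bigl(1 - 0.9(x-1)\bigr)(x-2)^2$ from Property~\ref{prop:2_p1} with the linear lower bound $p_2(x) \geq \tfrac52 - x$ obtained by integrating the slope bound $p_2'(x) \leq -1$ (Property~\ref{prop:3_p2}) from $x$ to $2$ together with $p_2(2) = \tfrac12$ (Property~\ref{prop:2_p2}). Substituting these into the definition of $f$ and using the identity $1 - (\tfrac52 - x)x = (x-2)(x-\tfrac12)$ to collapse the linear-in-$x$ contribution into an extra $(x-2)$ factor, a short computation yields
\[
f(x) \;\geq\; (x-2)^2\bigl[(x-\tfrac12) - (1 - 0.9(x-1))(x-1)\bigr] \;=\; (x-2)^2\bigl[\tfrac12 + 0.9(x-1)^2\bigr] \;\geq\; 0.
\]
Piecing together the five cases then gives $f(x) \geq 0$ on all of $[0, n]$.
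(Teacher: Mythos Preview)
Your proof is correct and follows essentially the same five-interval case split $[0,\tfrac12]\cup[\tfrac12,1]\cup[1,2]\cup[2,3]\cup[3,n]$ as the paper, using the identical properties of $p_1$ and $p_2$ in each region. The only difference is cosmetic: in the middle cases you extract the factor $(x-2)^2$ directly via the algebraic identities $1-(\tfrac52-x)x=(x-2)(x-\tfrac12)$ and $1-(\tfrac12-0.45(x-2))x=0.45(x-2)(x-\tfrac{10}{9})$, whereas the paper substitutes $x=2\mp a$ and expands---your presentation is arguably cleaner, but the two arguments are line-by-line equivalent.
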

\begin{proof}
	Note that $h_1(x),~h_2(x)\leq 0$ for $x \in [0,1]$. For all $x \in[0,\frac{1}{2}]$, $p_1(x)$ is decreasing and $h_1(x)$ is increasing in $x$. Thus, by Property~\eqref{prop:1_p1}, for $x \in[0,\frac{1}{2}]$, 
	$$
	f(x)\geq x-2-p_1(x)h_1(x) \geq  -2-p_1(1/2)h_1(1/2)
	\geq -2+4 \cdot \frac{1}{2}\geq 0.
	$$
	For $x \in [\frac{1}{2},1]$, both $p_2(x)$ and $h_2(x)$ are decreasing. Thus, for $x \in [\frac{1}{2},1]$ and by Property~\eqref{prop:1_p2},
	$$
	f(x) \geq x-2-p_2(x)h_2(x) \geq  -\frac{3}{2}-p_2(1)h_2(1/2)
	\geq -\frac{3}{2}+4\cdot \frac{3}{4}\geq 0.
	$$
	To prove the statement for $x \in [1,2]$, we show that for every $a \in [0,1]$, we have
	$f(2-a) \geq 0.$
	By construction, we have $f(2)=0$. Thus, the property holds for $a=0$. By Property~\eqref{prop:2_p1}, for polynomial $p_1$, we have $p_1(2-a) \leq (0.9a + 0.1)a^2$. By Properties~\eqref{prop:2_p2} and~\eqref{prop:3_p2}, for polynomial $p_2$, we have $p_2(2-a) \geq 1/2+a$. Thus,
	$$
	f(2-a) \geq -a- (0.9a+0.1)a^2(1-a)+(1/2+a) a(2-a)=a^2 ((0.9 a-1.8) a+1.4),
	$$
	which is nonnegative for $a\in [0,1]$. This proves the statement for $x \in [1,2]$.
	
	To prove the statement for $x \in [2,3]$ we show that for every $a \in [0,1]$, it holds that $f(2+a) \geq 0.$
	By Property~\eqref{prop:3_p1}, for $x \in [2,3]$ and $n\geq 2$, we get $p_1(2+a) \leq \frac{1}{4}a^2$.
	By Property~\eqref{prop:4_p2}, we get that $p_2(2+a) \leq -0.45a+\frac{1}{2}$. Thus,
	$$
	f(2+a) \geq a - \frac{1}{4}a^2(1+a) - \left( -0.45a +\frac{1}{2} \right)(2+a)a = (0.15 + 0.2 a) a^2,
	$$
	which is non-negative for $a\in [0,1]$. This proves the statement for $x \in [2,3]$.
	Finally, for $x \in [3,n]$, we have 
	$$
	f(x) \geq x-2 - \frac{1}{2n^2} (x-2)^2 (x-1) -\frac{1}{2n} x(x-2) \geq 0.
	$$
\end{proof}
\subsection{Proof of Theorem~\ref{thm:SC_SoS_rank}}
By Lemma~\ref{lem:f_bigger_than_p_1h_1+p_2h_2}, 
$$
x-2 - p_1(x)h_1(x)-p_2(x)h_2(x) \geq 0 
$$
for $x \in [0,n]$ and the degree of the polynomial on the LHS is at most $O(\sqrt{n}\log(n))$.
Thus, by Theorem~\ref{thm:Blekherman_TheSecond}, there exist SoS polynomials $s_0,s_1$ such that
$$
x-2 - p_1(x)h_1(x)-p_2(x)h_2(x) =s_0(x)+x(n-x)s_1(x).	
$$
Thus,
$$
x-2 = s_0(x)+x(n-x)s_1(x) +p_1(x)h_1(x)+p_2(x)h_2(x).	
$$
Remark~\ref{rem:univariate_to_multivaraite_SoS} and the fact that $|\mathbf{x}|(n-|\mathbf{x}|)$ has a degree 1 SoS certificate over the Boolean hypercube imply the existence of a degree $O(\sqrt{n}\log(n))$ certificate over the Boolean hypercube for the polynomial $\sum_{i=1}^n x_i-2$.

\end{document}